\newcommand{\repeattheorem}[1]{%
  \begingroup
  \renewcommand{\thetheorem}{\ref{#1}}%
  \expandafter\expandafter\expandafter\theorem
  \csname reptheorem@#1\endcsname
  \endtheorem
  \endgroup
}
\xdef\csname reptheorem@#1\endcsname{%
    \unexpanded\expandafter{\BODY}%
  }%
\unskip\label{#1}\endtheorem
\newcommand{\B}{\mbox{$\{0,1\}$}}
\newcommand\defeq{\stackrel{\mathrm{\scriptsize def}}{=}}
\newtheorem{theorem}{Theorem}
\newtheorem{corollary}[theorem]{Corollary}
\newtheorem{lemma}[theorem]{Lemma}
\newtheorem{definition}[theorem]{Definition}
\begin{document}

\title{Optimal (controlled) quantum state preparation and improved unitary synthesis by quantum circuits with any number of ancillary qubits}

\author{Pei Yuan}\thanks{peiyuan@tencent.com}
\author{Shengyu Zhang}\thanks{shengyzhang@tencent.com}
\affiliation{Tencent Quantum Laboratory, Tencent, Shenzhen, Guangdong 518057, China}

\maketitle

\begin{abstract}
As a cornerstone for many quantum linear algebraic and quantum machine learning algorithms, controlled quantum state preparation (CQSP) aims to provide the transformation of $\ket{i}\ket{0^n} \to \ket{i}\ket{\psi_i}$ for all $i\in \B^k$ for the given $n$-qubit states $\ket{\psi_i}$. 
In this paper, we construct a quantum circuit for implementing CQSP, with depth $O\left(n+k+\frac{2^{n+k}}{n+k+m}\right)$ and size $O\left(2^{n+k}\right)$ for \textit{any} given number $m$ of ancillary qubits. These bounds, which can also be viewed as a time-space tradeoff for the transformation, are \textit{optimal} for any integer parameters $m,k\ge 0$ and $n\ge 1$.

When $k=0$, the problem becomes the canonical quantum state preparation (QSP) problem with ancillary qubits, which asks for efficient implementations of the transformation $\ket{0^n}\ket{0^m} \to \ket{\psi}\ket{0^m}$. This problem has many applications with many investigations, yet its circuit complexity remains open. Our construction completely solves this problem, pinning down its depth complexity to $\Theta(n+2^{n}/(n+m))$ and its size complexity to $\Theta(2^{n})$ for \textit{any} $m$.

Another fundamental problem, unitary synthesis, asks to implement a general $n$-qubit unitary by a quantum circuit. Previous work shows a lower bound of $\Omega(n+4^n/(n+m))$ and an upper bound of $O(n2^n)$ for $m=\Omega(2^n/n)$ ancillary qubits. In this paper, we quadratically shrink this gap by presenting a quantum circuit of the depth of  $O\left(n2^{n/2}+\frac{n^{1/2}2^{3n/2}}{m^{1/2}}\right)$. 
\end{abstract}

\section{Introduction}\label{sec:intro}

Quantum algorithms use quantum effects such as quantum entanglement and coherence to process information with the efficiency beyond any classical counterparts can achieve. In the past decade, 
many quantum machine learning algorithms \cite{biamonte2017quantum}  share a common subroutine of \textit{quantum state preparation} (QSP), which loads a $2^n$-dimensional complex-valued vector $v=(v_x: x\in \B^n)^T\in \mathbb{C}^{2^n}$ to an $n$-qubit quantum state $|\psi_v\rangle =\sum_{x\in\{0,1\}^n}v_x|x\rangle$. These include quantum principle component analysis \cite{lloyd2014quantum}, quantum recommendation systems \cite{kerenidis2017quantum}, quantum singular value decomposition \cite{rebentrost2018quantum}, quantum linear system algorithm \cite{harrow2009quantum,wossnig2018quantum}, quantum clustering \cite{kerenidis2019q,kerenidis2020quantum}, quantum support vector machine \cite{rebentrost2014quantum}, etc. Quantum state preparation is also a key step in many Hamiltonian simulation algorithms \cite{berry2015simulating,low2017optimal,low2019hamiltonian,berry2015hamiltonian}.

{Some of these quantum machine learning algorithms, such as quantum linear system algorithm \cite{wossnig2018quantum}, quantum recommendation systems \cite{kerenidis2017quantum}
and quantum $k$-means clustering \cite{kerenidis2019q}, need an oracle that can \textit{coherently} prepare many states: $\ket{i}\ket{0^{n}}\to\ket{i}\ket{\psi_i},\text{~for~all~}i\in \{0,1\}^k$. %These algorithms assume the existence of such an oracle, and each query to the oracle has a unit cost in their analysis of running time. How to implement the oracle efficiently is thus a fundamental question for these algorithms to be used in practice. 
We shall refer to this as the \textit{controlled quantum state preparation} (CQSP) problem.} 
The QSP and CQSP problems are also used in quantum walk algorithms such as the one by Szegedy \cite{szegedy2004quantum} and by MNRS \cite{magniez2011search}.  Given a general $N\times N$ state transition probability matrix $P = [P_{xy}]_{x,y\in [N]}$, quantum walk algorithms often call three subroutines: Setup, Check, and Update. 
The Setup procedure needs to prepare state $\sum_x \sqrt{\pi(x)}\ket{x}$, where $\pi$ is the stationary distribution of $P$. The Update procedure needs to realize $\ket{x} \ket{0^{\log N}} \to \ket{x} \sum_y \sqrt{P_{xy}}\ket{y}$, a typical CQSP problem.

More generally, quantum algorithms can be represented as unitaries, which need to be implemented by quantum circuits for a digital quantum computer to run the algorithm. What is the minimum depth and size that any unitary operation can be compressed to? This paper also addresses this Unitary synthesis (US) problem by presenting a parametrized quantum circuit that can implement any given unitary operation. 

In all these CQSP, QSP, and US problems, we hope to find quantum circuits as simple as possible for the sake of efficiency of execution and physical realization. Standard measures for quantum circuits include depth, size (i.e. the number of gates), and a number of qubits. The depth of a circuit corresponds to time and the number of qubits to space. The rapid advancement of the number of qubits provides opportunities to trade space for time, and indeed it has been found that ancillary qubits are useful in compressing the circuit depth for many tasks including CQSP, QSP, and US. It is a fundamental question to pin down the time-space tradeoff, or in circuit complexity language, the depth-qubit number tradeoff, for both quantum state preparation and general unitary synthesis problems.

%In classical computation, the parallel algorithms is the foundation of modern high-performance computing. The classical parallel algorithms inspires the quantum counterparts, i.e., how to reduce the time complexity of quantum algorithms? In circuit model, quantum algorithms can be described as quantum circuits, which consist of arbitrary single-qubit gates and CNOT gates. The quantum circuit depth corresponds to the time cost and the number of qubits corresponds to the space cost of quantum algorithms. A natural question is how to utilize qubits to reduce the depth of quantum circuit. Namely, how to trade off time and space for quantum algorithms?

%\paragraph{Previous studies and our results}
\paragraph{Controlled quantum state preparation and Quantum state preparation}
Much previous work focuses on specific CQSP by quantum circuits \cite{park2019circuit,de2020circuit,di2020fault}.
%Reference [di2020fault] discussed the circuit depth.
QSP, in contrast, has been extensively studied. Bergholm \textit{et al.} presented a QSP circuit with $2^{n+1}-2n-2$ CNOT gates and depth $O(2^n)$, without ancillary qubits \cite{bergholm2005quantum}. Plesch and Brukner  \cite{plesch2011quantum} improve the number of CNOT gate to $\frac{23}{24} 2^n - 2^{\frac{n}{2}+1} + \frac{5}{3}$ for even $n$, and $\frac{115}{96} 2^n $ for odd $n$. Ref. \cite{bergholm2005quantum} also gives a depth upper bound of $\frac{23}{48}2^n$ for even $n$ and $\frac{115}{192}2^n$ for odd $n$. The best result was obtained in \cite{sun2021asymptotically}, where the authors achieve the depth $O(2^n/n)$, which is optimal. 

Zhang \textit{et al.} \cite{zhang2021low} presented a QSP circuit of depth $O(n^2)$, by using $O(4^n)$ ancillary qubits, but the circuit involves measurement and the  probability of successfully generating the target state is only $\Omega(1 / (\max_i |v_i|^2 2^n))$. 

The best previous result on QSP for an arbitary number $m$ of ancillary qubits is by \cite{sun2021asymptotically}, where the authors presented a quantum circuit of depth $O\left(n+\frac{2^n}{n+m}\right)$ and size $O(2^n)$ for $m=O\left(\frac{2^n}{n\log n}\right)$ or $m=\Omega(2^n)$, which is asymptotically optimal. For $m\in\left[\Omega\left(\frac{2^n}{n\log n}\right),O(2^n)\right]$, they proposed a QSP circuit of depth $O(n\log n)$, which is only $O(\log n)$ off from the lower bound $\Omega\left(\max\left\{n,\frac{2^n}{n+m}\right\}\right)$. Later, Rosenthal independently constructed a QSP circuit of depth $O(n)$ using ${O}(n2^n)$ ancillary qubits \cite{rosenthal2021query}. The result also showed that an $n$-qubit quantum state preparation is in ${\sf QAC}_f^0$ with the same number of ancillary qubits. After that, \cite{zhang2022quantum} gave yet another proof of the $O(n)$ depth upper bound using $O(2^n)$ ancillary qubits. Both \cite{rosenthal2021query} and \cite{zhang2022quantum} did not give results for general $m$.
%\shengyu{(Please add the bibtex entry for ZLY22...)}

A related study is to prepare a quantum state in the \textit{unary} encoding $\sum_{k=0}^{2^k-1}v_k\ket{e_k}$ instead of binary encoding $\sum_{k=1}^{2^n-1}v_k\ket{k}$ in \cite{johri2021nearest}, where $e_i\in \{0,1\}^{2^n}$ is the vector with the $k$-th bit being 1 and all other bits being 0. The binary encoding quantum state preparation is more efficient than unary encoding because binary encoding QSP utilizes $n$ qubits but unary encoding utilizes $2^n$ qubits. In \cite{johri2021nearest}, Johri \textit{et al.} prepared a unary encoding quantum state by a circuit of depth $O(n)$. Moreover, by encoding $k$ to a $d$-dimensional tensor $(k_1, k_2, \ldots, k_d)$, they extended the QSP circuit construction and obtained circuit depth $O\left(\frac{n}{d}2^{n-n/d}\right)$. If $d=n$, their encoding of $k$ is binary encoding, and the depth upper bound is $O(2^n)$.

In this paper, we first give new quantum circuit constructions for CQSP with quantum content.
% \yuan{
% %文中明确使用O_U
% Algorithms utilizing $O_U$:
% \begin{enumerate}
%     \item "A quantum linear system algorithm for dense matrices", \cite{wossnig2018quantum}, using $O_U$ as an oracle for coefficient matrix.
%     \item "Quantum Recommendation Systems",\cite{kerenidis2017quantum}, using $O_U$ as an oracle for  preference matrix. (an QSVE algorithm)
%     \item "q-means: A quantum algorithm for unsupervised
% machine learning" \cite{kerenidis2019q}
% \end{enumerate}
% }

\begin{theorem}
[CQSP] \label{thm:multi-QSP}
For any integers $k,m\ge 0$, $n>0$ and any quantum states $\{\ket{\psi_i}:i\in \{0, 1\}^k\}$, the following controlled quantum state preparation
\[\ket{i}\ket{0^{n}}\to\ket{i}\ket{\psi_i}, \ \forall i\in \{0, 1\}^k\]
can be implemented by a quantum circuit consisting of single-qubit and CNOT gates of depth $O\left(n+k+\frac{2^{n+k}}{n+k+m}\right)$ and size $O\left(2^{n+k}\right)$ with $m$ ancillary qubits. These bounds are optimal for any $k,m \ge 0$.
\end{theorem}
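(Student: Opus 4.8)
The plan is to prove matching upper and lower bounds separately, treating the upper bound as the substantial half.

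For the upper bound I would first recast the map $\ket{i}\ket{0^n}\to\ket{i}\ket{\psi_i}$ as a structured $(n+k)$-qubit operation. Applying the standard multiplexor (uniformly controlled rotation) decomposition of state preparation \cite{bergholm2005quantum} blockwise in $i$, the whole transformation factorizes as a product of $n$ uniformly controlled single-qubit rotations: the $j$-th one acts on the $j$-th content qubit and is controlled by the $k$ index qubits together with content qubits $1,\dots,j-1$, i.e.\ by $k+j-1$ qubits carrying $2^{k+j-1}$ distinct angles. Equivalently this is ``QSP on $n+k$ qubits in which the top $k$ qubits are only ever used as controls.'' Summing the leaf counts gives $\sum_{j=1}^n 2^{k+j-1}=2^{n+k}-2^k$, which already matches the target size $O(2^{n+k})$ and reduces everything to implementing uniformly controlled rotations in low depth.

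The engine I would build is a depth-efficient implementation of a uniformly controlled single-qubit rotation on $t$ controls (equivalently a $(t+1)$-qubit diagonal unitary, up to conjugation by Hadamard/CNOT layers), achieving depth $O\!\left(t+2^t/(t+m)\right)$ and size $O(2^t)$ with $m$ ancillary qubits, using only single-qubit and CNOT gates. The mechanism is the Walsh--Hadamard expansion of the angle function into $2^t$ parities $\chi_S$, computed in parallel across $\Theta(1+m/t)$ ancilla work-copies: within each copy a Gray-code ordering makes consecutive parities differ by a single CNOT, so a copy sweeps through its share of the $2^t$ phases in depth proportional to that share, while a CNOT fan-out and its inverse distribute and recollect the control values. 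Tuning the number of copies to $m$ yields the advertised tradeoff for a single rotation.

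I expect the main obstacle to be the depth accounting when composing the $n$ rotations, not any single rotation. The exponential terms behave well: $\sum_{j=1}^n 2^{k+j}/(k+j+m)$ is a geometric-type sum dominated by its last term and is $O(2^{n+k}/(n+k+m))$. The additive terms are the danger: a naive sum of the per-rotation overheads $O(k+j)$ gives $O(n^2+nk)$, far worse than the target $O(n+k)$. Overcoming this is the crux. The idea is to refuse to pay a fresh fan-out per stage: maintain a single persistent fan-out of the control register and update it incrementally as each content qubit is promoted to a control, amortizing all copying to $O(n+k)$ in total; and in the ancilla-rich regime, where $2^{n+k}/(n+k+m)$ has already dropped to $O(n+k)$, switch to a fully parallel layout in the spirit of the $O(n)$-depth constructions cited above, so that the additive cost never exceeds $O(n+k)$ for any $m$. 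A recursive split of the content register into a high block of $a$ qubits and a low block turns CQSP$(k,n)$ into CQSP$(k,a)$ followed by CQSP$(k+a,n-a)$, which is the vehicle for balancing stage sizes while keeping the total additive overhead linear; proving that the two cost types never blow up simultaneously, and that the bound degrades smoothly across the whole range of $m$, is the delicate part.

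For optimality I would prove three lower bounds. Size $\Omega(2^{n+k})$ follows from parameter counting: the family $\{\ket{\psi_i}\}$ carries $\Theta(2^{n+k})$ independent real parameters and each gate contributes $O(1)$. Dividing size by the at-most-$O(n+k+m)$ gates available per layer gives depth $\Omega(2^{n+k}/(n+k+m))$. For the additive depth term I would reduce from plain QSP on $n+k$ qubits: precomposing any CQSP circuit with $H^{\otimes k}$ on the index register (depth $1$) produces a circuit that prepares the $(n+k)$-qubit state $\tfrac{1}{\sqrt{2^k}}\sum_i\ket{i}\ket{\psi_i}$ from $\ket{0^{n+k}}$, and this family of uniform-block states is rich enough to contain instances meeting the known QSP depth lower bound $\Omega(n+k)$; hence the CQSP depth is $\Omega(n+k)$ as well. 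Combining the three bounds yields $\Omega(n+k+2^{n+k}/(n+k+m))$, matching the construction.
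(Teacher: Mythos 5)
Your Case-1-style argument (decomposing CQSP into $n$ uniformly controlled gates $V^{[k+j-1]}_{\{k+j\}}$ and invoking the known UCG bound) is exactly what the paper does for $m=O\bigl(2^{n+k}/(n+k)^2\bigr)$, and there the additive $O(n^2+nk)$ overhead is harmless because it is dominated by $2^{n+k}/(n+k+m)$. Your lower-bound sketch also matches the paper's remarks in substance. The genuine gap is in the ancilla-rich regime, which you correctly identify as the crux but do not actually resolve. The ``persistent fan-out, updated incrementally'' idea cannot rescue the sequential UCG decomposition: the $n$ stages are inherently sequential (stage $j+1$ is controlled on the qubit written by stage $j$), and the $O(k+j)$ additive depth of each stage in the UCG construction is not purely fan-out cost, so no amortization of copying brings the total below $\Omega(n^2)$ along that route. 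Your fallback --- ``switch to a fully parallel layout in the spirit of the $O(n)$-depth constructions'' --- is precisely where the missing idea lives: those constructions are for \emph{uncontrolled} QSP, and naively controlling an $O(n)$-depth, $O(n2^n)$-size circuit on $k$ index qubits means controlling every gate, which does not preserve the depth.

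The paper's resolution is a structural decomposition of Rosenthal's $O(n)$-depth QSP circuit as $C_5L_5^vC_4L_4^vC_3L_3^vC_2L_2^vC_1L_1^v$, where the deep pieces $C_r$ are \emph{independent of the target state} and all dependence on $v$ is concentrated in five depth-one layers $L_r^v$ of single-qubit gates. Then $\sum_i \ket{i}\bra{i}\otimes QSP_i$ factors so that the controls act only on the $L_r^i$ layers, each of which is a multi-target UCU with single-qubit targets and is implemented in depth $O\bigl(n+k+2^{n+k}/m\bigr)$ by fanning out the index register (Lemmas \ref{lem:copy1} and \ref{lem:multi-targ-UCU}); the $C_r$ pass through uncontrolled as $\mathbb{I}\otimes C_r$. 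Your two-block recursion CQSP$(k,a)$ then CQSP$(k+a,n-a)$ does appear in the paper (Case 2.2), but only as a preprocessing trick to handle very small $n$; it is not the mechanism that controls the additive depth. Separately, your $\Omega(n+k)$ lower bound via precomposition with $H^{\otimes k}$ needs the additional check that the hard instances for $(n+k)$-qubit QSP can be realized within the uniform-block family $\tfrac{1}{\sqrt{2^k}}\sum_i\ket{i}\ket{\psi_i}$; the paper instead applies the light-cone argument directly to CQSP.
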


Taking $k=0$, This immediately implies the following result for QSP.
\begin{theorem}[QSP]
\label{thm:QSP}
For any $m\ge 0$, any $n$-qubit quantum state $\ket{\psi_v}$ can be generated by a quantum circuit, using single-qubit gates and CNOT gates, of depth $O\left(n+\frac{2^n}{n+m}\right)$ and size $O(2^{n})$  with $m$ ancillary qubits. These bounds are optimal for any $m \ge 0$.
\end{theorem}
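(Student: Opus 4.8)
The plan is to observe that Theorem~\ref{thm:QSP} is precisely the $k=0$ specialization of Theorem~\ref{thm:multi-QSP}: with an empty index register the map $\ket{0^n}\to\ket{\psi_v}$ is a single instance of CQSP, the depth bound $O(n+k+2^{n+k}/(n+k+m))$ collapses to $O(n+2^n/(n+m))$, and the size bound to $O(2^n)$. Since the matching depth lower bound $\Omega(\max\{n,2^n/(n+m)\})$ and the size lower bound $\Omega(2^n)$ are already established in prior work (the bound attributed to Sun \textit{et al.} above), optimality is inherited and the whole content reduces to exhibiting the construction. I would therefore carry out the argument directly in the $k=0$ language, since it contains every essential idea of the general case.

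First I would decompose $\ket{0^n}\to\ket{\psi_v}$ into a product of $n$ uniformly controlled rotations via the standard recursive (Grover--Rudolph / tree) scheme: the $j$-th stage rotates qubit $j$ by an angle depending on the bits already fixed on qubits $1,\dots,j-1$, hence carries $2^{j-1}$ independent angles, and summing gives a family with $\Theta(2^n)$ real parameters, matching the size count. Each uniformly controlled rotation I would reduce to a diagonal unitary $\sum_c e^{i\theta_c}\ket{c}\bra{c}$, sandwiched by single-qubit basis changes on the target, and expand its phase in the Walsh/parity basis, $\theta_c=\sum_S a_S(-1)^{\bigoplus_{i\in S}c_i}$, so that the diagonal unitary factors into $2^\ell$ parity-controlled single-qubit $Z$-rotations, each realizable with CNOTs (to compute the relevant parity) plus one single-qubit rotation.

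The key technical step is to run these parity rotations at depth roughly (work)/(parallelism). Here I would use quantum fanout: broadcasting a control bit to $t$ ancillas with a CNOT tree of depth $O(\log t)$, so that the parities needed within one stage are computed on disjoint ancilla blocks simultaneously, their $Z$-rotations applied in a single layer, and then uncomputed. With $n+m$ qubits available for this parallel work, a stage of $2^\ell$ rotations costs depth $O(\ell+2^\ell/(n+m))$, and reusing the $m$ ancillas across stages by uncomputation keeps the total qubit budget at exactly $m$ while preserving the $O(2^n)$ size.

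I expect the main obstacle to be controlling the additive overhead. A naive stage-by-stage accounting gives $\sum_{\ell\le n}(\ell+2^\ell/(n+m))=O(n^2+2^n/(n+m))$, and even an optimized $O(\log\ell)$ fanout depth per stage only yields $O(n\log n+2^n/(n+m))$ --- exactly the $O(\log n)$ gap left open by Sun \textit{et al.} To reach the claimed $O(n+2^n/(n+m))$ \emph{uniformly} in $m$, I would have to amortize the fanout and parity networks across all $n$ stages, merging them into a single Gray-code-ordered CNOT structure of total depth $O(n)$ while keeping the rotation layers parallel; arranging this so the circuit degrades gracefully as $m$ sweeps from $0$ (no ancillas, depth $\Theta(2^n/n)$) up to $m=\Omega(2^n)$ (depth $\Theta(n)$) is the delicate part. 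Once the construction is in hand, optimality for every $m\ge 0$ follows immediately from the cited depth lower bound $\Omega(\max\{n,2^n/(n+m)\})$ together with the dimension-counting size lower bound $\Omega(2^n)$.
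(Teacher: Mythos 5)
Your opening observation is exactly the paper's route: Theorem~\ref{thm:QSP} is obtained by setting $k=0$ in Theorem~\ref{thm:multi-QSP}, and optimality follows from the cited lower bounds. The problem is that the construction you then sketch for the $k=0$ case is not the paper's construction, and it contains a genuine gap at precisely the point where the theorem is hard. Your stage-by-stage scheme (Grover--Rudolph decomposition into $n$ uniformly controlled rotations, Walsh/parity expansion, fanout trees) is essentially the approach of \cite{sun2021asymptotically}, and, as you yourself note, it yields $O(n\log n + 2^n/(n+m))$ in the regime $m \in \left[\omega\left(\frac{2^n}{n\log n}\right), o(2^n)\right]$ --- exactly the logarithmic gap this theorem is closing. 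Your proposed fix, ``amortize the fanout and parity networks across all $n$ stages, merging them into a single Gray-code-ordered CNOT structure of total depth $O(n)$,'' is asserted but not constructed, and it is not what the paper does; no argument is given for why such a merged structure exists while keeping the rotation layers parallel and the ancilla count at $m$.

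The missing idea is a structural change of framework rather than an amortization. For $m=\Omega(2^n/n^2)$ the paper abandons the $n$-stage UCG decomposition entirely and uses Rosenthal's tree-based QSP (Lemmas~\ref{lem:rosenthal-alg}--\ref{lem:rosen}), reorganized so that the circuit is a product of a \emph{constant} number of blocks: a constant number of depth-$O(n)$ circuits $C_1,\dots,C_5$ that are independent of the target state, interleaved with a constant number of depth-$1$ layers $L_1^v,\dots,L_5^v$ of $O(2^n)$ single-qubit gates that carry all the dependence on $v$. The additive $O(n)$ in the depth then comes from $O(1)$ deep but data-independent blocks rather than from summing over $n$ stages, and the data-dependent layers are cheap because they are single layers of single-qubit gates (in the CQSP setting these become the parallel UCGs of Lemma~\ref{lem:multi-targ-UCU}). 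Only for $m = O(2^n/n^2)$ does the paper fall back on the UCG decomposition, where the $\sum_i O(i)$ overhead is absorbed by the $\Omega(n^2)$ main term. Without this two-regime split and the separation of depth from data-dependence, your sketch does not reach the claimed bound uniformly in $m$.
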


These bounds match the known lower bounds of circuit depth and size for QSP: $\Omega\big(\max \big\{n,\frac{4^n}{n+m}\big\}\big)$ for depth \cite{zhang2021parallel,sun2021asymptotically} and $\Omega(4^n)$ for size \cite{shende2004minimal}. 
Thus we completely characterize the depth and size complexity for QSP with any number $m$ of ancillary qubits. 

\paragraph{Unitary synthesis}
For general unitary synthesis, Barenco \textit{et al.} constructed a circuit involving $O(n^3 4^n)$ CNOT gates \cite{barenco1995elementary}. Knill reduced the circuit size to $O(n4^n)$ in \cite{knill1995approximation}, which was further improved by Vartiainen \textit{et al.} \cite{vartiainen2004efficient} and Mottonen and Vartiainen \cite{mottonen2005decompositions} to $O(4^n)$, the same order as the lower bound of $\left\lceil\frac{1}{4}(4^n-3n-1)\right\rceil$ for for the number of CNOT gates \cite{shende2004minimal}.
 
These results assume no ancillary qubits. When there are $m$ ancillary qubits available, Ref. \cite{sun2021asymptotically} presented a quantum circuit for $n$-qubit general unitary synthesis of depth $O\left(n2^n+\frac{4^n}{n+m}\right)$, and also proved a depth lower bound of $\Omega\left(n+\frac{4^n}{n+m}\right)$. Hence, their circuit depth bounds are asymptotically optimal when $m=O\left(2^n/n\right)$, and leave a gap of $\left[\Omega\left(n+\frac{4^n}{m}\right),O\left( n2^n+\frac{4^n}{m}\right)\right]$ when $m=\Omega\left(2^n/n\right)$. By using Grover search in a clever way, Rosenthal improved the depth upper bound to $O(n2^{n/2})$ with $m = \Theta(n4^n)$ ancillary qubits \cite{rosenthal2021query}, but did not give results for smaller $m$.

For general unitary synthesis, based on the cosine-sine decomposition and Grover search, we can improve the circuit depth for general unitary synthesis as follows.
\begin{theorem}[Unitary synthesis]\label{thm:unitary_mul_intro}
For any $m\ge 0$, any $n$-qubit unitary $U\in\mathbb{C}^{2^n\times 2^n}$ can be implemented by a quantum circuit with $m$ ancillary qubits, using single-qubit gates and CNOT gates, { of depth $O(4^n/(n+m))$} when $m=O(2^n/n)$, $O\left(\frac{n^{1/2}2^{3n/2}}{m^{1/2}}\right)$ when ${\Omega(2^n/n) \le m}\le O(4^n/n)$, and $O(n2^{n/2})$ when $m = \Omega(4^n/n)$.
\end{theorem}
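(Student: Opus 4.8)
The plan is to combine three ingredients: the cosine-sine decomposition to reduce general unitary synthesis to a collection of controlled state-preparation tasks, the CQSP circuit of Theorem~\ref{thm:multi-QSP} to implement those tasks efficiently with the available ancillae, and Grover-style amplitude amplification to shave the number of sequential stages from $2^n$ down to $2^{n/2}$, in the spirit of \cite{rosenthal2021query}. Recall that the previous bound $O(n2^n+4^n/(n+m))$ of \cite{sun2021asymptotically} splits into a ``width'' term $4^n/(n+m)$, forced by the $4^n$ free parameters of $U$ and essentially optimal, and a ``sequential'' term $n2^n$ coming from composing $\sim 2^n$ stages (multiplexed rotations or reflections) each of depth $\Theta(n)$. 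My target is to keep the width term under control while replacing the $2^n$ sequential factor by $2^{n/2}$.

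First I would apply the recursive cosine-sine (quantum Shannon) decomposition to write $U$ as a product of block-diagonal unitaries and uniformly controlled (multiplexed) single-qubit rotations. Each multiplexed rotation is exactly a CQSP-type object: it applies an angle $\theta_x$ conditioned on a control string $x$, and the whole family $\{\theta_x\}$ can be loaded in parallel using Theorem~\ref{thm:multi-QSP}. Equivalently --- and this is the form I expect to be most convenient for the Grover step --- I would reduce applying $U$ to preparing, controlled on an index register, the rows (or columns) of $U$ as $n$-qubit states, which is again an instance of CQSP with $k=n$ and so is handled by Theorem~\ref{thm:multi-QSP} in depth $O\!\left(n+4^n/(n+m)\right)$.

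Second is the Grover/amplitude-amplification step. Using the controlled row-preparation isometry $V$ together with a Hadamard/swap-test-like subroutine, I would build a block encoding $W$ whose ``good'' branch carries $U\ket{\phi}$ with amplitude $\Theta(2^{-n/2})$, the $2^{-n/2}$ being the normalization of the uniform superposition used to form the inner products $\sum_x U_{yx}\phi_x$. Oblivious amplitude amplification then yields $U\ket{\phi}$ exactly after $O(2^{n/2})$ rounds, each round costing two calls to $V,V^\dagger$ plus cheap reflections. With enough ancillae to make each round cost depth $O(n)$, this already delivers the leading term $O(n2^{n/2})$. For general $m$ I would split the budget of $m$ ancillae between the width of the CQSP subroutine and the Grover workspace (parallel copies, or a partition of the index set) and optimize the split; this two-resource balancing is what turns a naive $1/m$ dependence into the claimed $m^{-1/2}$ dependence and produces the second term $n^{1/2}2^{3n/2}/m^{1/2}$. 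The constraint $m=\Omega(2^n/n)$ guarantees we never do worse than the prior $O(n2^n)$ bound, and $m=\Theta(4^n/n)$ is precisely the point where both terms equal $n2^{n/2}$.

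The hard part, I expect, is the interface between Grover search --- intrinsically a search primitive with a success probability --- and \emph{exact} unitary synthesis: I must ensure the amplified transformation equals $U$ on every input $\ket{\phi}$ simultaneously (oblivious amplitude amplification with an exactly tuned amplitude and round count), not merely on a fixed state, and that the ancillae are returned clean. The second genuine obstacle is the ancilla-allocation optimization: proving that the $m$ ancillae can be exploited simultaneously to parallelize the per-round CQSP and to reduce the effective Grover depth, so that the tradeoff curve $n^{1/2}2^{3n/2}/m^{1/2}$ --- rather than the weaker $2^{5n/2}/m$ obtained by dumping all ancillae into a single CQSP call --- is actually achievable across the whole range $m=\Omega(2^n/n)$.
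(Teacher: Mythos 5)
You have assembled the right three ingredients (cosine--sine decomposition, the CQSP circuit of Theorem \ref{thm:multi-QSP} for the column oracle, and Rosenthal-style Grover synthesis, which is exactly Lemma \ref{lem:unitary_grover} of the paper), and you correctly compute that the naive combination --- Grover over the full $n$-qubit unitary with each oracle call realized as an $(n,n)$-CQSP --- gives only $O\left(n2^{n/2}+\frac{2^{5n/2}}{m}\right)$. But the mechanism you propose for upgrading the $1/m$ to $1/m^{1/2}$, namely splitting the $m$ ancillae between ``CQSP width'' and ``Grover workspace,'' does not work: the Grover rounds are inherently sequential, so no allocation of ancillae to the Grover side reduces the round count, and putting all ancillae into the per-round CQSP call is exactly the $2^{5n/2}/m$ bound you are trying to beat. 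You flag this as an unresolved obstacle rather than closing it, so the argument is incomplete at precisely the step that separates the theorem from the easy corollary of Lemmas \ref{lem:oracle} and \ref{lem:unitary_grover}.

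The paper's actual balancing parameter is not an ancilla split but the level $k$ at which the cosine--sine recursion is truncated. Recursing the decomposition $n-k$ times writes $U$ as a product of $O(2^{n-k})$ factors, each either a $1$-target UCG (handled by Lemma \ref{lem:UCG}) or an $(n-k,k)$-UCU. The new technical content is Lemma \ref{lem:general_UCU}: each $(n-k,k)$-UCU $\sum_x |x\rangle\langle x|\otimes U_x$ is implemented by running the Grover synthesis on the $k$-qubit blocks $U_x$, using $O(2^{k/2})$ rounds whose controlled oracle $\sum_x |x\rangle\langle x|\otimes O_{U_x}$ is itself an $(n,k)$-CQSP of depth $O\left(n+\frac{2^{n+k}}{m}\right)$; this gives depth $O\left(n2^{k/2}+\frac{2^{n+3k/2}}{m}\right)$ per UCU. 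Summing over the $2^{n-k}$ sequential factors yields $O\left(n2^{n-k/2}+\frac{2^{2n+k/2}}{m}\right)$, and minimizing over $k$ (the optimum sits at $2^{k}\approx nm/2^{n}$) produces the claimed $O\left(n2^{n/2}+\frac{n^{1/2}2^{3n/2}}{m^{1/2}}\right)$. So the $m^{-1/2}$ comes from trading the number of sequential UCU factors against the Grover round count and oracle width, not from how the ancillae are partitioned. Your other worry, about making Grover yield $U$ exactly and obliviously on every input, is legitimate but is already packaged inside the cited Lemma \ref{lem:unitary_grover} from \cite{rosenthal2021query}, so it is not where the new work lies.
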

The improvement of this result over the one in \cite{rosenthal2021query} is two-fold. First, to achieve the same minimum depth of $O(n2^{n/2})$, we need fewer ancillary qubits: we need $m = \Theta(4^n/n)$ compared to $m = \Theta(n4^n)$ used in \cite{rosenthal2021query}. Second, our method works for any $m$ as opposed to the one in \cite{rosenthal2021query} which needs $m = \Theta(n4^n)$ many ancillary qubits. Note that there is still a gap between upper and lower bounds when $m = \omega(2^n/n)$, left as an interesting open question for future studies.

Theorem \ref{thm:unitary_mul_intro} and previous results on circuit depth for general unitary synthesis are shown in Figure \ref{fig:result}.

\begin{figure*}[!ht]
\centering
\begin{tikzpicture}

\draw[->] (0,0) -- (8,0) node[anchor=north] {\small $\sharp$ ancilla~$m$};
		
\draw (0,0) node[anchor=east] {$O$};
\draw (0,6) node[anchor=east] {\scriptsize $O(4^n/n)$};
\draw (0,36/21.4) node[anchor=east] {\scriptsize $O(n2^n)$};
\draw (0,1) node[anchor=east] {\scriptsize $O(n2^{n/2})$};
\draw (0,0.5) node[anchor=east] {\scriptsize $O(n)$};
\draw (1.4, 0) node[anchor=north] {\scriptsize $O(\frac{2^n}{n})$};
\draw (6, 0) node[anchor=north] {\scriptsize $O(\frac{4^n}{n})~~~$};
\draw (6.5, 0) node[anchor=north] {\scriptsize $~~~~~O(n4^n)$};

% vertical axis
\draw[->] (0,0) -- (0,7) node[anchor=east] {\small circuit depth $d$};

\draw	(3,4.5) node[anchor=west]{\small Depth upper bound in \cite{sun2021asymptotically}}
		(3,5) node[anchor=west]{\small Depth lower bound in \cite{sun2021asymptotically} }
		(3,5.5) node[anchor=west]{\small Depth upper bound in Theorem \ref{thm:unitary_mul_intro}}
		(3,4) node[anchor=west]{\small Depth upper bound in \cite{rosenthal2021query}};
\draw[thick,blue] (2.5,4.5)--(3,4.5);
\draw[thick, orange] (2.5,5)--(3,5);
\draw[thick,purple] (2.5,5.5)--(3,5.5);
\draw[thick] (2.3,3.6) -- (8.5,3.6) -- (8.5,5.9) -- (2.3,5.9) -- cycle;

\draw [green, thick] (6.5,1.03) -- (8,1.04);
\draw[line width=1.5pt, thick, orange] plot[smooth, domain = 0:1.4] (\x, {(36/11)/((\x+0.05) + (6/11))+0.02});
\draw[thick,orange] plot[smooth, domain = 1.4:6] (\x, {(36/11)/(\x + (6/11))});
\draw[thick,orange] (6,0.5) -- (8,0.5);

\draw[blue]	(6.2,1.95) node[anchor=west]{\scriptsize $O\left(n2^n+\frac{4^n}{n+m}\right)$};
\draw[purple]	(5.6,1.3) node[anchor=west]{\scriptsize $O\left(n2^{n/2}+\frac{n^{1/2}2^{3n/2}}{m^{1/2}}\right)$};
\draw[orange]	(6,0.75) node[anchor=west]{\scriptsize $\Omega\left(\max\{n,\frac{4^n}{n+m}\}\right)$};

\draw[thick, blue] plot[smooth, domain = 0:1.4] (\x, {(36/11)/(\x + (6/11))});
\draw[thick, blue] (1.4,36/21.4) -- (8,36/21.4);

\draw[thick,purple] plot[smooth, domain = 1.4:6] (\x, {(165.6/14.6)/(\x + (78/14.6))});
\draw[thick,purple] (6,1) -- (8,1);

\draw[dotted, gray] (0,1) -- (6, 1) -- (6,0);
\draw[dotted, gray] (0,36/21.4) -- (1.4, 36/21.4) -- (1.4,0) ;
\draw[dotted, gray] (0,0.5) -- (6, 0.5);
\draw[dotted, gray] (6.5, 1) -- (6.5, 0);

\draw (6.5,1) node[fill,black,draw=black,circle,scale = 0.2]{};
\draw (0,6) node[fill,black,draw=black,circle,scale = 0.2]{};
\draw (1.4, 36/21.4) node[fill,black,draw=black,circle,scale = 0.2]{};
\draw (6,0.5) node[fill,black,draw=black,circle,scale = 0.2]{};
\draw (6,1) node[fill,black,draw=black,circle,scale = 0.2]{};
\draw (0,36/21.4) node[fill,black,draw=black,circle,scale = 0.2]{};
\draw (0,1) node[fill,black,draw=black,circle,scale = 0.2]{};
\draw (0,1/2) node[fill,black,draw=black,circle,scale = 0.2]{};
\draw (1.4,0) node[fill,black,draw=black,circle,scale = 0.2]{};
\draw (6,0) node[fill,black,draw=black,circle,scale = 0.2]{};
\draw (6.5,0) node[fill,black,draw=black,circle,scale = 0.2]{};

\draw [green,thick] (2.5,4) -- (3,4);
\end{tikzpicture}
\caption{ Circuit depth upper and lower bounds for general $n$-qubit unitary when $m$ ancillary qubits are available. Ref. \cite{sun2021asymptotically} gives an  upper bound of $O\left(n2^n+\frac{4^n}{n+m}\right)$ and a lower bound of $\Omega\left(\max\left\{n+\frac{4^n}{n+m}\right\}\right)$, Ref. \cite{rosenthal2021query} presents a quantum circuit of depth $O\left(n2^{n/2}\right)$ using  $m=\Theta(n4^n)$ ancillary qubits. This paper gives an upper bound of $O\left(n2^{n/2}+\frac{n^{1/2}2^{3n/2}}{m^{1/2}}\right)$ for any $m=\Omega(2^n/n)$.
} 
\label{fig:result}
\end{figure*}
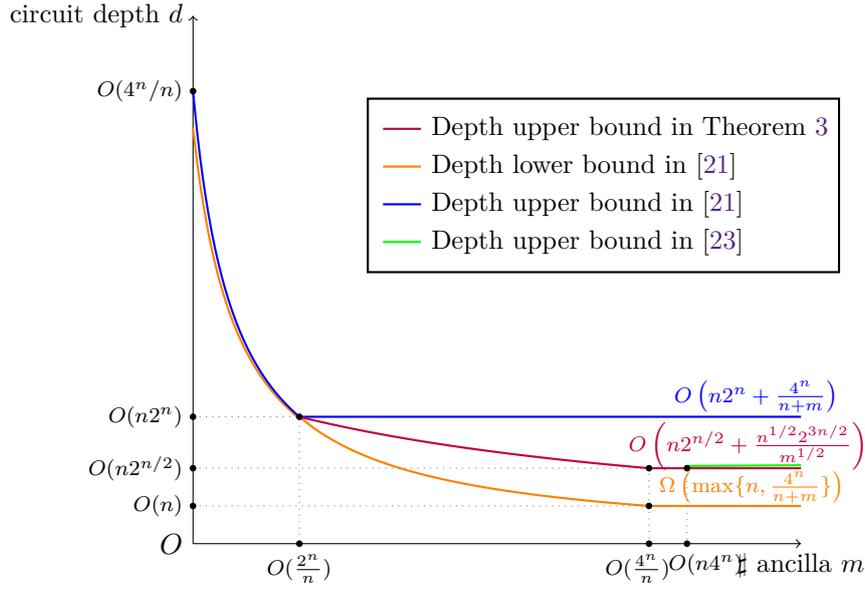

\medskip
\noindent { \textbf{Relation to QRAM}. The CQSP problem has a close relation to quantum random access memory (QRAM). In the original proposal \cite{giovannetti2008quantum,Giovannetti2008Architectures}, QRAM aims to provide the transformation of $\ket{i}\ket{0^n} \to \ket{i}\ket{\psi_i}$ for all $i\in \B^k$, where $\ket{\psi_i}$'s are states in $\{\ket{0},\ket{1}\}^{\otimes n}$ or in $(\mathbb{C}^2)^{\otimes n}$, depending on whether the QRAM stores classical or quantum information as its content. Many quantum algorithms such as those mentioned at the beginning of this section, usually assume an efficient implementation of QRAM with classical content, and the hope is to have a hardware device to realize this. Despite some conceptual designs, working QRAM devices are yet to be convincingly demonstrated, even for a small scale. Results on CQSP in this paper address a related and fundamental question of \textit{implementing QRAM (with quantum content) by standard quantum circuits}, and show tight depth and size bounds for it.} 

\paragraph{Organization.} The rest of this paper is organized as follows. In Section \ref{sec:pre}, we introduce notation and review some previous results.
In Section \ref{sec:QSP_optimal}, we will present a quantum circuit for (controlled) quantum state generation using arbitrary number of ancillary qubits. Then we will show a quantum circuit for general unitary synthesis in Section \ref{sec:unitary}.

\section{Preliminary}\label{sec:pre}
\paragraph{Notation} Let $[n]$ denote the set $\{1,2,\cdots,n\}$. All logarithms $\log(\cdot)$ are base 2 in this paper.  
For any $x=x_1\cdots x_s\in\{0,1\}^s,y=y_1\cdots y_t\in\{0,1\}^t$, $xy$ denotes the $(s+t)$-bit string $x_1\cdots x_s y_1\cdots y_t\in\{0,1\}^{s+t}$. 
The $n$-qubit state $\ket{i} = \ket{i_0 i_1 \cdots i_{n-1}} \in (\{\ket{0}, \ket{1}\})^{\otimes n}$ is the \textit{binary} encoding of $i$ satisfying $i=\sum_{j=0}^{n-1} i_j\cdot 2^j$.

\paragraph{Quantum gates and circuits} 
An $n$-qubit gate/unitary is a $2^n\times 2^n$ unitary operation on $n$ qubits. The identity unitary is usually denoted by $\mathbb{I}_n$. The $X$ gate is the single-qubit gate that flips the basis $\ket{0}$ and $\ket{1}$. Single-qubit gates are known to have the following factorization. 
\begin{lemma}[\cite{nielsen2002quantum}, Corollary 4.2]\label{lem:1Qfactor}
    Any single-qubit gate $U$ can be written as $U = e^{i\alpha} AXBXC$ for some $\alpha\in \mathbb{R}$ and some single-qubit gate $A$, $B$ and $C$ satisfying $ABC = \mathbb{I}_1$. 
\end{lemma}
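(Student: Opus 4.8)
The plan is to reduce the statement to the standard Euler-angle (ZYZ) decomposition of a single-qubit unitary and then to partition the three rotation factors into three pieces $A,B,C$ chosen so that the two constraints — namely $ABC=\mathbb{I}_1$ and the product $AXBXC$ reproducing the original gate — hold simultaneously. Writing $R_y(\theta)=e^{-i\theta Y/2}$ and $R_z(\theta)=e^{-i\theta Z/2}$ for the Pauli rotations, I would first invoke the companion fact (Theorem~4.1 in the same reference) that every single-qubit gate admits the form $U=e^{i\alpha}R_z(\beta)R_y(\gamma)R_z(\delta)$ for some real parameters $\alpha,\beta,\gamma,\delta$. This follows from unitarity alone: the entries of a $2\times 2$ unitary carry only three independent angles beyond the overall phase, and these match exactly the three Euler angles.

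The mechanism I would exploit is that $X$ anticommutes with both $Y$ and $Z$, so conjugation by $X$ negates every rotation angle, i.e. $XR_y(\theta)X=R_y(-\theta)$ and $XR_z(\theta)X=R_z(-\theta)$, using $X^2=\mathbb{I}_1$. With this in hand I would set
\[
A=R_z(\beta)R_y\!\left(\tfrac{\gamma}{2}\right),\qquad B=R_y\!\left(-\tfrac{\gamma}{2}\right)R_z\!\left(-\tfrac{\delta+\beta}{2}\right),\qquad C=R_z\!\left(\tfrac{\delta-\beta}{2}\right).
\]
The product $ABC$ collapses because the two factors $R_y(\pm\gamma/2)$ cancel and the three $R_z$ angles sum to $\beta-\tfrac{\delta+\beta}{2}+\tfrac{\delta-\beta}{2}=0$, giving $ABC=\mathbb{I}_1$. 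For the other identity I would compute $XBX=R_y(\tfrac{\gamma}{2})R_z(\tfrac{\delta+\beta}{2})$ via the conjugation rules above, so that
\[
AXBXC=R_z(\beta)R_y\!\left(\tfrac{\gamma}{2}\right)R_y\!\left(\tfrac{\gamma}{2}\right)R_z\!\left(\tfrac{\delta+\beta}{2}\right)R_z\!\left(\tfrac{\delta-\beta}{2}\right)=R_z(\beta)R_y(\gamma)R_z(\delta),
\]
where the two $R_y$ angles add to $\gamma$ and the trailing $R_z$ angles add to $\delta$. Multiplying through by $e^{i\alpha}$ then recovers $U$ exactly.

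The only genuinely nontrivial ingredient is the Euler-angle decomposition itself; once that is granted, the rest reduces to the additive composition law $R_w(\theta_1)R_w(\theta_2)=R_w(\theta_1+\theta_2)$ for a fixed axis $w\in\{y,z\}$ together with the sign flip induced by $X$-conjugation. The main point requiring care — and the place where the construction is really doing work — is splitting the middle angles so that the cancellations forcing $ABC=\mathbb{I}_1$ do not disturb the angle sums needed to rebuild $R_z(\beta)R_y(\gamma)R_z(\delta)$ in $AXBXC$; the half-angle choice above is engineered precisely to make both hold at once. I would close by noting that the argument is fully constructive, yielding $A,B,C$ explicitly from the Euler angles of $U$.
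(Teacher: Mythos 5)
Your proposal is correct, and it is exactly the standard argument behind the cited result (Nielsen--Chuang, Corollary 4.2): the ZYZ Euler decomposition $U=e^{i\alpha}R_z(\beta)R_y(\gamma)R_z(\delta)$ followed by the half-angle splitting $A=R_z(\beta)R_y(\gamma/2)$, $B=R_y(-\gamma/2)R_z(-(\delta+\beta)/2)$, $C=R_z((\delta-\beta)/2)$, using $X$-conjugation to flip rotation angles. The paper itself offers no proof and simply cites this source, so your argument coincides with the intended one.
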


A CNOT gate acts on two qubits, one control qubit, and one target qubit. The gate flips the basis $\ket{0}$ and $\ket{1}$ on the target qubit, conditioned on the control qubit is on $\ket{1}$. A quantum circuit on $n$ qubits implements a unitary transform of dimension $2^n\times 2^n$. A quantum circuit may consist of different types of gates. One typical set of gates contains all 1-qubit gates and 2-qubit CNOT gates. This is sufficient to implement any unitary transform. For notational convenience, we call this type of quantum circuits the \textit{standard quantum circuits}. Unless otherwise stated, a circuit in this paper refers to a standard quantum circuit. A subset of circuits is \textit{CNOT circuits}, which are the ones consisting of 2-qubit CNOT gates only. 

A Toffoli gate is a 3-qubit CCNOT gate where we flip the basis $\ket{0},\ket{1}$ of (i.e. apply $X$ gate to) the third qubit conditioned on the first two qubits are both on $\ket{1}$. Namely, there are two control qubits and one target qubit. This can be extended to an $n$-fold Toffoli gate, which applies the $X$ gate to the $(n+1)$-th qubit conditioned on the first $n$ qubits all being on $\ket{1}$. This $n$-fold Toffoli gate can be implemented by a standard quantum circuit of linear depth and size without ancillary qubits \cite{multi-controlled-gate}, and of logarithmic depth and linear size if a linear number of ancillary qubits are available \cite{baker2019decomposing}.
\begin{lemma} \label{lem:tof}
An $n$-fold Toffoli gate can be implemented by a standard quantum circuit of $O(n)$ depth and size without using any ancillary qubit and to $O(\log n)$ depth and $O(n)$ size using $O(n)$ ancillary qubits.
\end{lemma}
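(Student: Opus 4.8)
The plan is to reduce the whole statement to implementing the multi-controlled gate $C^n(X)$ -- the $n$-fold Toffoli flips the target exactly when all $n$ control qubits are on $\ket{1}$ -- using ordinary $3$-qubit Toffoli gates as the main primitive. Since a single Toffoli gate decomposes into a constant number of CNOT and single-qubit gates (six CNOTs plus single-qubit gates, using the kind of factorization in Lemma \ref{lem:1Qfactor}), any bound phrased in Toffoli-count and Toffoli-depth transfers, up to a constant factor, to a bound on standard circuits. I would then prove the two regimes -- no ancilla and $O(n)$ ancillae -- by genuinely different circuit layouts.

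For the second bound ($O(\log n)$ depth, $O(n)$ size, with $O(n)$ ancillae) I would use a balanced binary AND-tree with a compute--copy--uncompute pattern. Put the $n$ controls at the leaves. In the first layer, group them into $\lceil n/2\rceil$ disjoint pairs and, for each pair, apply one Toffoli whose two controls are the pair and whose target is a fresh ancilla initialized to $\ket{0}$; these act on disjoint qubits, so the layer has depth $1$. In the second layer, pair up the $\lceil n/2\rceil$ ancillae now holding partial conjunctions and AND them into new ancillae, and so on. After $\lceil\log n\rceil$ layers a single root ancilla holds the conjunction of all $n$ controls; a CNOT from this ancilla onto the target performs the flip, after which I run the tree backwards to reset every ancilla to $\ket{0}$. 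The tree has $n-1$ internal nodes, so the size is $2(n-1)+1=O(n)$ and the number of ancillae is $O(n)$, while the depth is $O(\log n)$ down, one CNOT, and $O(\log n)$ back up, i.e.\ $O(\log n)$ in total. This matches the construction underlying \cite{baker2019decomposing}.

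For the first bound ($O(n)$ depth and size with \emph{no} ancilla) I would build $C^n(X)$ by divide-and-conquer with borrowed (dirty) work qubits in the style of Barenco \emph{et al.}\ \cite{barenco1995elementary}. The basic block is the ``V-chain'': given $k$ controls together with at least $k-2$ work qubits in \emph{arbitrary} states, one can realize $C^k(X)$ by a ladder of $2(k-1)$ Toffoli gates arranged so that every borrowed work qubit is provably restored to its original state, giving depth and size $O(k)$. Splitting the $n$ controls into two halves $A$ and $B$ of size about $n/2$ and using each half as the dirty workspace for a V-chain acting on the other half, one writes $C^n(X)$ as a short conjugated product of two such blocks, so that a \emph{single} borrowed qubit already suffices for an $O(n)$-gate, $O(n)$-depth implementation. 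To discharge even that last borrowed qubit when literally no ancilla is available, I would invoke the ``conditionally clean'' technique of \cite{multi-controlled-gate}: a qubit drawn from the controls can be treated as clean conditioned on some other controls being $\ket{1}$, and the spurious flips and phases incurred when that condition fails cancel across a matched compute--uncompute pair, yielding $C^n(X)$ in $O(n)$ gates and $O(n)$ depth with strictly zero ancillae.

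I expect the no-ancilla linear bound to be the main obstacle. The naive controlled-$\sqrt{X}$ recursion, writing $C^n(X)$ in terms of $C^{n-1}(\sqrt{X})$, $C^{n-1}(\sqrt{X}^{\dagger})$ and CNOTs, branches into two subproblems of size $n-1$ and blows up exponentially, and Barenco's own ancilla-free decomposition only gives $O(n^2)$. The entire difficulty is the borrowed-qubit bookkeeping: organizing the recursion so that its branching is bounded, so that each borrowed work register is provably returned to its initial (possibly entangled, unknown) state, and so that the Toffoli layers can be packed into $O(n)$ depth. Once that accounting is in place, summing the V-chain costs over the constant-branching recursion gives a linear total, and re-expressing each Toffoli via CNOT and single-qubit gates completes the proof of both bounds in the lemma.
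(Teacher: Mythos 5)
The paper gives no proof of this lemma at all---it is imported from the literature, citing \cite{multi-controlled-gate} for the ancilla-free $O(n)$ bound and \cite{baker2019decomposing} for the $O(\log n)$-depth bound with $O(n)$ clean ancillae---and your two constructions are precisely reconstructions of those cited results: the balanced compute--copy--uncompute AND-tree for the second bound, and the Barenco-style borrowed-qubit V-chains \cite{barenco1995elementary} followed by the conditionally-clean-control trick of \cite{multi-controlled-gate} for the first. Your tree argument is complete and correct, and for the ancilla-free case you correctly identify that removing the last borrowed qubit at linear size is the hard step and, like the paper itself, defer exactly that step to \cite{multi-controlled-gate}, so your proposal matches the paper's (citation-level) treatment.
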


A non-standard quantum circuit model is $\textsf{QAC}^0_f$ circuit. A $\textsf{QAC}^0_f$ circuit is a quantum circuit with one-qubit gates, unbounded-arity Toffoli 
\[\ket{x_1, \ldots, x_k, b} \to \ket{x_1, \ldots, x_k, b\oplus \prod_{i=1}^k x_i}, \]
and 
fanout gates
\[\ket{b, x_1, \ldots, x_k} \to \ket{b, x_1\oplus b, \ldots, x_k\oplus b}.\]

\paragraph{CQSP,\text{ } QSP,\text{ } and US problems}
\begin{enumerate}
    \item The Controlled Quantum State Preparation (CQSP) problem is: Given $2^k$ quantum states $\ket{\psi_i}$ of $n$ qubits, realize the transformation of 
    \[\ket{i}\ket{0^n} \to \ket{i}\ket{\psi_i}, \ \forall i\in \B^k.\] 
    We sometimes write $(k,n)$-CQSP to emphasize the parameters. 
    
    \item The Quantum State Preparation (QSP) problem is the above CQSP problem in the special case of $k=0$. Given a complex vector $v=(v_0,v_1,v_2,\ldots,v_{2^n-1})^T\in \mathbb{C}^{2^n}$ with  $\sqrt{\sum_{j=0}^{2^n-1}|v_j|^2}=1$, generate the corresponding $n$-qubit quantum state \[\ket{\psi_v}=\sum_{j=0}^{2^n-1}v_j|j\rangle,\]
    by a quantum circuit from the initial state $\ket{0}^{\otimes n}$, where $\{|j\rangle: j=0, 1, \ldots, 2^n-1\}$ is the computational basis of the quantum system. We sometimes call a quantum circuit for quantum state preparation a QSP circuit. 
    
    \item The general Unitary Synthesis (US) problem is: Given an $n$-qubit unitary $U$, find a quantum circuit to implement it. 
\end{enumerate}

In all these problems, we hope to find circuits as simple as possible, and standard measures for quantum circuits include depth, size (i.e. the number of gates), and number of qubits. The depth of a circuit corresponds to time and the number of qubits to space. For many information processing tasks including QSP and US, ancillary qubits turn out to be very helpful, and indeed there have been studies on quantum circuits with ancillary qubits for QSP and US. Since these tasks are often used as subroutines, it is usually desirable to have the ancillary qubits initialized to $\ket{0}$ at the beginning and are restored  to $\ket{0}$ at the end. Thus we say that a quantum circuit $C$ prepares an $n$-qubit quantum state $\ket{\psi}$ with $m$ ancillary qubits if 
\[{C}\left(\ket{0}^{\otimes n}\ket{0}^{\otimes m}\right) =\ket{\psi}\ket{0}^{\otimes m}.\] 
Similarly, we call an $(n+m)$-qubit quantum circuit ${C}$ implements an $n$-qubit unitary $U$ 
using $m$ ancillary qubits if \[{C}\left(\ket{\psi}\ket{0}^{\otimes m}\right) = \left(U\ket{\psi}\right)\otimes \ket{0}^{\otimes m},\ \text{for any  $n$-qubit state } \ket{\psi}.\]

\paragraph{Uniformly Controlled Unitary (UCU)} Let $S=\{s_1,\ldots,s_k\}$, $T=\{t_1,\ldots,t_\ell\}$ and $S\cap T=\emptyset$. A \textit{uniformly controlled unitary} $V^S_T$ consists of $2^k$ controlled unitary operations, where $S$ is the index set of the control qubits, and $T$ is the index set of target qubits. The $2^k$ multiple-controlled unitary operations are conditioned on distinct basis states of the $k$ control qubits; see Figure \ref{fig:UCU} for the circuit representation of $V^S_T$, where $U$ is a shorthand for the collection of $U_0, U_1, \ldots, U_{2^k-1}$. To make the sizes of $S$ and $T$ explicit, we sometimes call  $V^S_T$ a $(k,\ell)$-UCU. The matrix representation of $V_T^S$ is
\[
V^S_T=\left(\begin{array}{cccc}
    U_0 &  & &\\
     & U_1 & &\\
      &  &\ddots &\\
       &  & & U_{2^k-1}\\
\end{array}\right)\in\mathbb{C}^{2^{(k+\ell)}\times 2^{(k+\ell)}},
\]
where $U_0,U_1,\ldots,U_{2^k-1}\in \mathbb{C}^{2^{\ell}\times 2^{\ell}}$ are unitary matrices.  If $S=\emptyset$, $V^S_T$ is a just an $\ell$-qubit unitary operation. If $\ell = 1$, the UCU is also called \textit{uniformly controlled gate} (UCG), and we refer to $k$-UCG for $(k,1)$-UCU.
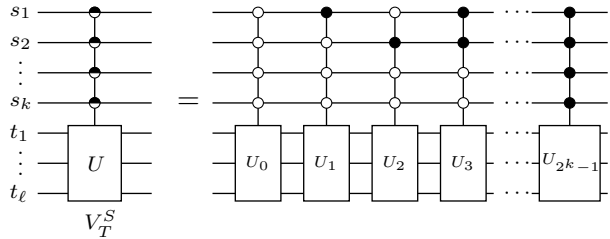
\begin{figure}[H]
\centering
\begin{tikzpicture}
\draw[line width =0.5pt] (0,0) -- (1.5,0);
\draw[line width =0.5pt] (0,-0.4) -- (1.5,-0.4);
\draw[line width =0.5pt] (0,-0.8) -- (1.5,-0.8);
\draw[line width =0.5pt] (0,-1.2) -- (1.5,-1.2);
\draw[line width =0.5pt] (0,-1.6) -- (1.5,-1.6);
\draw[line width =0.5pt] (0,-2) -- (1.5,-2);
\draw[line width =0.5pt] (0,-2.4) -- (1.5,-2.4);

\draw[line width =0.5pt] (0.75,0) -- (0.75,-2.4);

\draw [fill=white] (0.75,0) circle (0.07);
\draw [fill=black] (0.68,0) arc (180:0:0.07) -- cycle;
\draw [fill=white] (0.75,-0.4) circle (0.07);
\draw [fill=black] (0.68,-0.4) arc (180:0:0.07) -- cycle;
\draw [fill=white] (0.75,-0.8) circle (0.07);
\draw [fill=black] (0.68,-0.8) arc (180:0:0.07) -- cycle;
\draw [fill=white] (0.75,-1.2) circle (0.07);
\draw [fill=black] (0.68,-1.2) arc (180:0:0.07) -- cycle;

\draw [fill=white, line width =0.5pt] (0.4, -1.5) -- (1.1,-1.5) -- (1.1,-2.5) -- (0.4,-2.5) -- cycle;

\draw (0.52,-2) node[anchor=west]{\scriptsize $U$};

\draw (-0.5,0) node[anchor=west]{\scriptsize $s_1$};
\draw (-0.5,-0.4) node[anchor=west]{\scriptsize $s_2$};
\draw (-0.4,-0.7) node[anchor=west]{\scriptsize $\vdots$};
\draw (-0.5,-1.2) node[anchor=west]{\scriptsize $s_k$};
\draw (-0.5,-1.6) node[anchor=west]{\scriptsize $t_1$};
\draw (-0.4,-1.9) node[anchor=west]{\scriptsize $\vdots$};
\draw (-0.5,-2.4) node[anchor=west]{\scriptsize $t_\ell$};

%the second circuit
\draw[line width =0.5pt] (2.3,0) -- (6,0);
\draw[line width =0.5pt] (2.3,-0.4) -- (6,-0.4);
\draw[line width =0.5pt] (2.3,-0.8) -- (6,-0.8);
\draw[line width =0.5pt] (2.3,-1.2) -- (6,-1.2);
\draw[line width =0.5pt] (2.3,-1.6) -- (6,-1.6);
\draw[line width =0.5pt] (2.3,-2) -- (6,-2);
\draw[line width =0.5pt] (2.3,-2.4) -- (6,-2.4);

\draw[line width =0.5pt] (6.5,0) -- (7.5,0);
\draw[line width =0.5pt] (6.5,-0.4) -- (7.5,-0.4);
\draw[line width =0.5pt] (6.5,-0.8) -- (7.5,-0.8);
\draw[line width =0.5pt] (6.5,-1.2) -- (7.5,-1.2);
\draw[line width =0.5pt] (6.5,-1.6) -- (7.5,-1.6);
\draw[line width =0.5pt] (6.5,-2) -- (7.5,-2);
\draw[line width =0.5pt] (6.5,-2.4) -- (7.5,-2.4);

\draw (1.7,-1.2) node[anchor=west]{$=$};

\draw[line width =0.5pt] (2.9,0) -- (2.9,-2.4);
\draw [fill=white] (2.9,0) circle (0.07);
\draw [fill=white] (2.9,-0.4) circle (0.07);
\draw [fill=white] (2.9,-0.8) circle (0.07);
\draw [fill=white] (2.9,-1.2) circle (0.07);
\draw [fill=white, line width =0.5pt] (2.6, -1.5) -- (3.2,-1.5) -- (3.2,-2.5) -- (2.6,-2.5) -- cycle;
\draw (2.6,-2) node[anchor=west]{\tiny$U_0$};

\draw[line width =0.5pt] (3.8,0) -- (3.8,-2.4);
\draw [fill=black] (3.8,0) circle (0.07);
\draw [fill=white] (3.8,-0.4) circle (0.07);
\draw [fill=white] (3.8,-0.8) circle (0.07);
\draw [fill=white] (3.8,-1.2) circle (0.07);
\draw [fill=white, line width =0.5pt] (3.5, -1.5) -- (4.1,-1.5) -- (4.1,-2.5) -- (3.5,-2.5) -- cycle;
\draw (3.5,-2) node[anchor=west]{\tiny $U_1$};

\draw[line width =0.5pt] (4.7,0) -- (4.7,-2.4);
\draw [fill=white] (4.7,0) circle (0.07);
\draw [fill=black] (4.7,-0.4) circle (0.07);
\draw [fill=white] (4.7,-0.8) circle (0.07);
\draw [fill=white] (4.7,-1.2) circle (0.07);
\draw [fill=white, line width =0.5pt] (4.4, -1.5) -- (5,-1.5) -- (5,-2.5) -- (4.4,-2.5) -- cycle;
\draw (4.4,-2) node[anchor=west]{\tiny $U_2$};

\draw[line width =0.5pt] (5.6,0) -- (5.6,-2.4);
\draw [fill=black] (5.6,0) circle (0.07);
\draw [fill=black] (5.6,-0.4) circle (0.07);
\draw [fill=white] (5.6,-0.8) circle (0.07);
\draw [fill=white] (5.6,-1.2) circle (0.07);
\draw [fill=white, line width =0.5pt] (5.3, -1.5) -- (5.9,-1.5) -- (5.9,-2.5) -- (5.3,-2.5) -- cycle;
\draw (5.3,-2) node[anchor=west]{\tiny $U_3$};

\draw (6,0) node[anchor=west]{\scriptsize $\cdots$};
\draw (6,-0.4) node[anchor=west]{\scriptsize $\cdots$};
\draw (6,-0.8) node[anchor=west]{\scriptsize $\cdots$};
\draw (6,-1.2) node[anchor=west]{\scriptsize $\cdots$};
\draw (6,-1.6) node[anchor=west]{\scriptsize $\cdots$};
\draw (6,-2) node[anchor=west]{\scriptsize $\cdots$};
\draw (6,-2.4) node[anchor=west]{\scriptsize $\cdots$};

\draw[line width =0.5pt] (7,0) -- (7,-2.4);
\draw [fill=black] (7,0) circle (0.07);
\draw [fill=black] (7,-0.4) circle (0.07);
\draw [fill=black] (7,-0.8) circle (0.07);
\draw [fill=black] (7,-1.2) circle (0.07);
\draw [fill=white, line width =0.5pt] (6.6, -1.5) -- (7.4,-1.5) -- (7.4,-2.5) -- (6.6,-2.5) -- cycle;
\draw (6.5,-2) node[anchor=west]{\tiny $U_{2^k-1}$};
\draw (0.48,-2.8) node[anchor=west]{\scriptsize $V^S_T$};
\end{tikzpicture}
\caption{A uniformly controlled unitary (UCU) $V^{S}_T$, where $S=\{s_1,\ldots,s_k\}$ is the index set of the control qubits and $T=\{t_1,\ldots,t_\ell\}$ is the index set of the target qubits.}
\label{fig:UCU}
\end{figure}

Ref. \cite{sun2021asymptotically} gives the following size and depth upper bounds of a general UCG, which is a special case of our later Lemma \ref{lem:multi-targ-UCU} with $p=1$ and $q=n-1$.

\begin{lemma}[\cite{sun2021asymptotically}, Lemma 12]\label{lem:UCG}
Given $m$ ancillary qubits, any $n$-qubit UCG $V^{[n-1]}_{\{n\}}$ can be implemented by a standard quantum circuit of size $O(2^n)$ and depth $O\left(n+\frac{2^n}{m+n}\right)$.
\end{lemma}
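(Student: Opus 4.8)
The plan is to reduce an $n$-qubit UCG to a constant number of \emph{diagonal} $n$-qubit unitaries, and then to give a depth- and size-efficient circuit for an arbitrary diagonal unitary. For the reduction, I would first apply the Euler (Z-Y-Z) decomposition to each single-qubit block, writing $U_j = e^{i\alpha_j} R_z(\beta_j) R_y(\gamma_j) R_z(\delta_j)$ (a standard variant of Lemma \ref{lem:1Qfactor}). Substituting this into $V^{[n-1]}_{\{n\}} = \sum_j \ket{j}\!\bra{j}\otimes U_j$ factors the UCG into a control-only phase $\sum_j e^{i\alpha_j}\ket{j}\!\bra{j}\otimes \mathbb{I}_1$ and three uniformly controlled single-axis rotations. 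The $R_z$-type pieces are already diagonal in the computational basis, and the single $R_y$-type piece becomes diagonal after conjugating the target qubit by a fixed single-qubit gate $W$ with $W Z W^\dagger = Y$, i.e. $\sum_j \ket{j}\!\bra{j}\otimes R_y(\gamma_j) = (\mathbb{I}_{n-1}\otimes W)\bigl(\sum_j \ket{j}\!\bra{j}\otimes R_z(\gamma_j)\bigr)(\mathbb{I}_{n-1}\otimes W^\dagger)$. Collapsing adjacent diagonals gives $V^{[n-1]}_{\{n\}} = \Lambda_1\, W\, \Lambda_2\, W^\dagger\, \Lambda_3$ for three $n$-qubit diagonal unitaries $\Lambda_1,\Lambda_2,\Lambda_3$ together with $O(1)$ fixed single-qubit gates on the target, so it suffices to implement a diagonal unitary within the claimed bounds.

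To implement a diagonal unitary $\Lambda = \sum_{x\in\{0,1\}^n} e^{i\theta_x}\ket{x}\!\bra{x}$, I would expand the phase in the Walsh basis, $\theta_x = \sum_{S\subseteq[n]} c_S\,(-1)^{\bigoplus_{i\in S}x_i}$, so that $\Lambda$ factors into $2^n-1$ commuting \emph{parity-phase} gates, one per nonempty $S$ (the $S=\emptyset$ term is a global phase). Each parity-phase gate computes the parity $\bigoplus_{i\in S} x_i$ into a workspace qubit by a CNOT tree, applies an $R_z$ rotation determined by $c_S$, and uncomputes. Ordering the subsets $S$ along a Gray code so that consecutive subsets differ in a single element, the running parity is updated with one CNOT per step; this yields a serial circuit of size $O(2^n)$ and depth $O(2^n)$, which already establishes the size bound.

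To bring the depth down to $O\bigl(n+\tfrac{2^n}{n+m}\bigr)$ I would parallelize by using the $n$ input qubits and $m$ ancillas jointly as $\Theta(n+m)$ workspace qubits. The idea is to group the $2^n$ parity-phase gates into $\Theta\bigl(2^n/(n+m)\bigr)$ batches of $\Theta(n+m)$ gates each, and in each batch to compute all the required parities \emph{simultaneously} by a single CNOT network of depth $O(n)$ (after a one-time fan-out of the input register), apply all the phases in one layer, and then uncompute. With $O(n)$ per-batch overhead and a one-time $O(n)$ fan-out, the total depth is $O\bigl(n+\tfrac{2^n}{n+m}\bigr)$; note that when $m=0$ this already gives $O(2^n/n)$, since the input register alone supplies $n$ workspace qubits.

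The main obstacle is achieving the depth and size bounds \emph{simultaneously}. Either bound alone is easy: computing every parity from scratch in a logarithmic-depth tree gives the right depth but size $\Theta(n2^n)$, while the serial Gray code gives size $O(2^n)$ but depth $\Theta(2^n)$. The difficulty is to choose the batches — and the low-depth CNOT network realizing each batch's set of parities on only $n+m$ workspace qubits — so that intermediate parities are \emph{shared} (via a Gray-code or prefix structure both within and across batches), keeping the amortized cost at $O(1)$ CNOTs per parity-phase gate and hence the cumulative CNOT count at $O(2^n)$. Balancing the bounded fan-out of the input bits against the batch size $\Theta(n+m)$, so as to hit exactly the denominator $n+m$ without exceeding linear size, is the crux of the construction.
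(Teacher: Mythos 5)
You should first note that the paper does not actually prove Lemma \ref{lem:UCG}: it imports it verbatim from \cite{sun2021asymptotically}, so the only possible comparison is with the proof in that reference. Your outline follows the same strategy as that proof: reduce the UCG to $O(1)$ diagonal unitaries via the Z-Y-Z decomposition of each single-qubit block, expand the phase function of a diagonal unitary in the Walsh (parity) basis to obtain $2^n-1$ commuting parity-phase rotations, and then schedule these rotations on the $n+m$ available qubits. The reduction steps you give are correct and yield the $O(2^n)$ size bound via the serial Gray-code circuit.

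The genuine gap sits exactly where you place ``the crux,'' and your depth accounting for it is internally inconsistent. If each of the $\Theta\bigl(2^n/(n+m)\bigr)$ batches is realized by a CNOT network of depth $O(n)$, the total depth is $O\bigl(n2^n/(n+m)\bigr)$, which exceeds the target $O\bigl(n+2^n/(n+m)\bigr)$ by a factor of $n$; at $m=0$ it gives $O(2^n)$ rather than the required $O(2^n/n)$. To reach the claimed bound, the transition from one batch's set of $n+m$ parities to the next must cost amortized $O(1)$ depth and $O(n+m)$ CNOTs, i.e.\ $O(1)$ CNOTs per parity-phase gate, while every one of the $2^n-1$ nonzero parities is produced exactly once across all batches and each input bit is fanned out to only boundedly many workspace qubits per layer. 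Constructing such a schedule --- in \cite{sun2021asymptotically} this is done by threading $n+m$ Gray-code-like paths through the nonzero linear functions of the input, each step flipping a single variable, with a controlled number of copies of the input register to resolve fan-out conflicts --- is the entire technical content of the lemma. Naming this balancing act as the crux does not supply it, so what you have is a correct plan with the decisive step missing, not a proof.
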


The following framework of a QSP circuit was given in \cite{grover2002creating,kerenidis2017quantum}.
\begin{lemma}\label{lem:QSP_framework_UCG}
The QSP problem can be solved by $n$ UCGs of growing sizes, $V^{[n-1]}_{\{n\}} \cdots V^{[2]}_{\{3\}}V^{[1]}_{\{2\}}V^{\emptyset}_{\{1\}}$.
\end{lemma}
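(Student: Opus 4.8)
The plan is to prove the identity constructively, building $\ket{\psi_v}$ one qubit at a time and choosing the single-qubit components of each UCG so that after $j$ layers the circuit has produced a fixed ``half-prepared'' state supported on qubits $1,\dots,j$. Indexing amplitudes by bit strings $x\in\{0,1\}^n$ (so that $\ket{\psi_v}=\sum_x v_x\ket{x}$), I would associate to each prefix $s\in\{0,1\}^j$ of values on qubits $1,\dots,j$ the \emph{subtree mass} $a_s \defeq \sum_{t\in\{0,1\}^{n-j}}|v_{st}|^2$, where $st\in\{0,1\}^n$ is the concatenation. These quantities satisfy $a_\epsilon=1$ (by normalization of $v$) and the splitting identity $a_{s0}+a_{s1}=a_s$ for every prefix $s$, which is the only arithmetic fact the construction uses. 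The exact bijection between bit strings and the integer encoding $i=\sum_j i_j 2^j$ is irrelevant, since the decomposition holds for any consistent indexing of the computational basis.

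The inductive invariant I would maintain is that, for every $0\le j<n$,
\[
V^{[j-1]}_{\{j\}}\cdots V^{[1]}_{\{2\}}V^{\emptyset}_{\{1\}}\,\ket{0^n} \;=\; \sum_{s\in\{0,1\}^j}\sqrt{a_s}\,\ket{s}\ket{0^{n-j}},
\]
so that all intermediate amplitudes stay real and nonnegative and the complex phases of $v$ are introduced only at the very last layer. The base case $j=0$ is simply $\ket{0^n}=\sqrt{a_\epsilon}\,\ket{0^n}$.

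For the inductive step $j-1\to j$ with $j<n$, I would define $V^{[j-1]}_{\{j\}}$ by specifying its single-qubit block for each control value $s\in\{0,1\}^{j-1}$: when $a_s>0$ take any single-qubit gate $U_s$ whose first column is $U_s\ket{0} = \sqrt{a_{s0}/a_s}\,\ket{0}+\sqrt{a_{s1}/a_s}\,\ket{1}$ (and $U_s=\mathbb{I}_1$ when $a_s=0$). The splitting identity makes this column a unit vector, so such a $U_s$ exists, and applying it sends $\sqrt{a_s}\,\ket{s}\ket{0}\mapsto \sqrt{a_{s0}}\,\ket{s0}+\sqrt{a_{s1}}\,\ket{s1}$; summing over $s$ reassembles exactly $\sum_{s'\in\{0,1\}^j}\sqrt{a_{s'}}\,\ket{s'}\ket{0^{n-j}}$, advancing the invariant. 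For the final layer $V^{[n-1]}_{\{n\}}$ I would instead set, for each $s\in\{0,1\}^{n-1}$ with $a_s>0$, the block $U_s\ket{0}=(v_{s0}\ket{0}+v_{s1}\ket{1})/\sqrt{a_s}$; this is a unit vector because $|v_{s0}|^2+|v_{s1}|^2=a_s$, and it turns $\sqrt{a_s}\,\ket{s}$ into $v_{s0}\ket{s0}+v_{s1}\ket{s1}$, so the whole state becomes $\sum_{x\in\{0,1\}^n}v_x\ket{x}=\ket{\psi_v}$.

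I expect no genuine obstacle: the construction is essentially the Grover--Rudolph amplitude-loading scheme, and the existence of each $U_s$ follows immediately from extending a prescribed unit vector to an orthonormal basis. The only points requiring care are bookkeeping ones — arranging that the $n-1$ early layers stay purely real so that all phase information is deferred to the last UCG, and handling the degenerate prefixes with $a_s=0$, where the corresponding controlled block multiplies a zero amplitude and may therefore be chosen arbitrarily without affecting the output.
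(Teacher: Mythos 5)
Your construction is correct and is exactly the standard Grover--Rudolph amplitude-loading argument that the paper relies on by citing \cite{grover2002creating,kerenidis2017quantum} (the paper gives no proof of its own for this lemma). The inductive invariant, the use of the splitting identity $a_{s0}+a_{s1}=a_s$ to guarantee each controlled block is a valid single-qubit unitary, the deferral of phases to the last UCG, and the handling of zero-mass prefixes all match the intended argument.
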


\paragraph{Decomposition of $n$-qubit quantum gate}
Based on the cosine-sine decomposition, any $n$-qubit unitary can be decomposed into two $(1,n-1)$-UCUs and one $(n-1)$-UCG.

\begin{lemma} [\cite{paige1994history}]\label{lem:CSD}
Any $n$-qubit unitary $U\in\mathbb{C}^{2^n\times 2^n}$ can be decomposed as
\begin{equation}\label{eq:CSdecomp}
    U=\left(\begin{array}{cc}
   V_1'  & \\
     & V_1''
\end{array}\right)\left(\begin{array}{cc}
   C  & S\\
   -S  & C
\end{array}\right)\left(\begin{array}{cc}
   V_2'  & \\
     & V_2''
\end{array}\right),
\end{equation}
where $V_1',V_1'',V_2',V_2''\in\mathbb{C}^{2^{n-1}\times 2^{n-1}}$ are unitary matrices, $C,S\in \mathbb{C}^{2^{n-1}\times 2^{n-1}}$ are diagonal matrices whose diagonal elements are $\cos\theta_1$, $\cos\theta_2$, $\ldots$, $\cos\theta_{2^{n-1}}$ and $\sin\theta_1$, $\sin\theta_2$, $\ldots$, $\sin\theta_{2^{n-1}}$, respectively.
\end{lemma}
The circuit representation of the cosine-sine decomposition is shown in Figure \ref{fig:cosine-sine}.

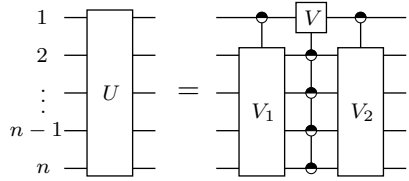
\begin{figure}[!ht]
    \centering
    \begin{tikzpicture}
    \draw[line width =0.5pt] (0,0) -- (1.2,0);
     \draw[line width =0.5pt] (0,-0.5) -- (1.2,-0.5);
     \draw[line width =0.5pt] (0,-1) -- (1.2,-1);
    \draw[line width =0.5pt] (0,-1.5) -- (1.2,-1.5);
    \draw[line width =0.5pt] (0,-2) -- (1.2,-2);
    \draw[line width =0.5pt,fill=white] (0.3,0.1) -- (0.9,0.1) -- (0.9,-2.1) -- (0.3, -2.1) -- cycle;
    \draw (0.37,-1) node[anchor=west]{\scriptsize $U$};

    \draw[line width =0.5pt] (2,0) -- (4.5,0);
     \draw[line width =0.5pt] (2,-0.5) -- (4.5,-0.5);
     \draw[line width =0.5pt] (2,-1) -- (4.5,-1);
    \draw[line width =0.5pt] (2,-1.5) -- (4.5,-1.5);
    \draw[line width =0.5pt] (2,-2) -- (4.5,-2);
    
    \draw [line width =0.5pt] (2.6,0) -- (2.6,-0.5);
    \draw [fill=white] (2.6,0) circle (0.07);
    \draw [fill=black] (2.53,0) arc (180:0:0.07) -- cycle;

    \draw [line width =0.5pt] (3.9,0) -- (3.9,-0.5);
    \draw [fill=white] (3.9,0) circle (0.07);
    \draw [fill=black] (3.83,0) arc (180:0:0.07) -- cycle;
    
    \draw [line width =0.5pt] (3.25,0) -- (3.25,-2);
    \draw [fill=white] (3.25,-0.5) circle (0.07);
    \draw [fill=black] (3.18,-0.5) arc (180:0:0.07) -- cycle;
    \draw [fill=white] (3.25,-1) circle (0.07);
    \draw [fill=black] (3.18,-1) arc (180:0:0.07) -- cycle;
    \draw [fill=white] (3.25,-1.5) circle (0.07);
    \draw [fill=black] (3.18,-1.5) arc (180:0:0.07) -- cycle;
    \draw [fill=white] (3.25,-2) circle (0.07);
    \draw [fill=black] (3.18,-2) arc (180:0:0.07) -- cycle;
    
    \draw[line width =0.5pt,fill=white] (2.3,-0.4) -- (2.9,-0.4) -- (2.9,-2.1) -- (2.3, -2.1) -- cycle;
    \draw[line width =0.5pt,fill=white] (3.05, 0.2) -- (3.45,0.2) -- (3.45,-0.2) -- (3.05,-0.2) -- cycle;
    \draw[line width =0.5pt,fill=white] (3.6,-0.4) -- (4.2,-0.4) -- (4.2,-2.1) -- (3.6, -2.1) -- cycle;
    
    \draw (2.33,-1.25) node[anchor=west]{\scriptsize $V_1$};
    \draw (3.63,-1.25) node[anchor=west]{\scriptsize $V_2$};
    \draw (3.02, 0) node[anchor=west]{\scriptsize $V$};
    \draw (1.35, -1) node[anchor=west]{=};
    
    \draw (-0.5, 0) node[anchor=west]{\scriptsize $1$};
    \draw (-0.5, -0.5) node[anchor=west]{\scriptsize $2$};
    \draw (-0.45, -1) node[anchor=west]{\scriptsize $\vdots$};
    \draw (-0.85, -1.5) node[anchor=west]{\scriptsize $n-1$};
    \draw (-0.5, -2) node[anchor=west]{\scriptsize $n$};
    \end{tikzpicture}
        \caption{The cosine-sine decomposition of an $n$-qubit quantum gate in the language of UCU.}
    \label{fig:cosine-sine}
\end{figure}

\section{Asymptotically optimal circuit depth for (controlled) quantum state preparation}
\label{sec:QSP_optimal}

Now we give a more detailed implementation and analyze the correctness and cost of the quantum circuit. Recall that we are constructing quantum circuits to implement QSP and CQSP, without assuming any QRAM hardware available.

First, we will use the following copying circuit many times so we single it out as a lemma.

\begin{lemma} [\cite{sun2021asymptotically}]\label{lem:copy1}
For any $x=x_1x_2\ldots x_n\in\{0,1\}^n$, a unitary transformation $U_{copy}$ satisfying
\[\ket{x}\ket{0^{mn}}\xrightarrow{U_{copy}} \ket{x}\underbrace{\ket{x}\ket{x}\cdots \ket{x}}_{m~\text{copies of}~\ket{x}},\]
can be implemented by a CNOT circuit of depth $O(\log m)$ and size $O(mn)$.
\end{lemma}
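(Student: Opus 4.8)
The plan is to reduce the task to copying a single bit and then parallelize across the $n$ bit-positions. Lay out the qubits as an original register $R_0$ holding $\ket{x} = \ket{x_1\cdots x_n}$ together with $m$ ancilla blocks $R_1,\ldots,R_m$, each of $n$ qubits initialized to $\ket{0^n}$; write $q_{r,j}$ for the $j$-th qubit of block $R_r$, so that $q_{0,j}$ carries $x_j$ and we want every $q_{r,j}$ to carry $x_j$ at the end. The key observation is that, for a fixed bit-position $j\in[n]$, the $m+1$ qubits $\{q_{r,j}: r=0,1,\ldots,m\}$ that must all end up holding $x_j$ form a set that is disjoint from the corresponding set for any other $j'\neq j$. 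Hence the $n$ single-bit copying procedures act on pairwise disjoint qubits and can be run simultaneously, so the overall depth equals the depth of a single-bit copy and the overall size is $n$ times the single-bit size.

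For the single-bit subroutine, I would use a binary doubling (fanout) tree. Starting from one qubit holding the bit value $b$ and $m$ fresh $\ket{0}$ qubits, layer $t$ applies a single parallel layer of CNOTs in which each qubit already holding $b$ acts as a control on a previously untouched $\ket{0}$ target; since a CNOT with control $b$ onto $\ket{0}$ produces $b$ in the target and leaves the control unchanged, the number of qubits holding $b$ doubles each layer. After $t$ layers there are $2^t$ copies, so $\lceil \log(m+1)\rceil = O(\log m)$ layers suffice, with the final layer truncated to produce exactly $m+1$ copies. Each layer is a single depth level because its CNOTs act on disjoint control–target pairs, giving depth $O(\log m)$; and since each of the $m$ target qubits receives exactly one incoming CNOT, the single-bit size is exactly $m$, hence $O(mn)$ in total.

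Correctness follows by induction on the layer index: the invariant is that after layer $t$ precisely $\min\{2^t, m+1\}$ qubits of the position-$j$ family hold $x_j$ and all others still hold $\ket{0}$, which is maintained because controls are never overwritten and each target is written once. The main point requiring care is the \emph{scheduling}: one must specify the doubling so that, within each layer, the chosen controls are distinct qubits already holding the value and the chosen targets are fresh $\ket{0}$ qubits, guaranteeing that all CNOTs in a layer commute and execute in parallel. This is the standard GHZ/fanout tree structure, and once the disjointness of control–target pairs per layer is verified, the depth and size bounds are immediate; combining with the parallelism across the $n$ positions yields the claimed $O(\log m)$ depth and $O(mn)$ size for $U_{copy}$.
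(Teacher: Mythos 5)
Your proposal is correct and matches the standard argument for this lemma (which the paper cites from \cite{sun2021asymptotically} rather than reproving): a binary-doubling CNOT fanout tree per bit position, run in parallel across the $n$ disjoint position families, giving depth $\lceil\log(m+1)\rceil = O(\log m)$ and exactly $mn$ CNOT gates. Nothing is missing; the scheduling invariant you state is precisely what makes each layer a single depth level.
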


The next lemma says that a special type of UCU can be implemented efficiently.
\begin{lemma}\label{lem:multi-targ-UCU}
For all $i\in[p]$ and $x\in\{0,1\}^q$, suppose that $U^x_i$ is a single-qubit gate and let $L^x=\bigotimes_{i=1}^p U^x_i$. Then for any $m\ge pq$, the unitary 
$\sum_{x\in\{0,1\}^q}|x\rangle\langle x|\otimes L^x $ can be implemented by a standard quantum circuit of depth $O\left(\log p+q+\frac{p2^q}{m}\right)$ and size $O(p2^q)$ with $m$ ancillary qubits.
\end{lemma}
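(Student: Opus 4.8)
The plan is to exploit the tensor-product structure of each controlled operation $L^x=\bigotimes_{i=1}^p U^x_i$ to factor the target unitary into $p$ single-target uniformly controlled gates, and then to parallelize these gates by replicating the control register.

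For each target index $i\in[p]$ define the operator $W_i=\sum_{x\in\{0,1\}^q}|x\rangle\langle x|\otimes U^x_i$, which acts on the $q$ control qubits and the $i$-th target qubit and as the identity on the remaining $p-1$ target qubits. Each $W_i$ is a $(q+1)$-qubit UCG of the form $V^{[q]}_{\{q+i\}}$. Since the $W_i$ are all diagonal in the control basis and act on pairwise disjoint target qubits, they commute, and a direct computation gives $\prod_{i=1}^p W_i=\sum_{x}|x\rangle\langle x|\otimes\bigotimes_{i=1}^p U^x_i$, which is exactly the unitary we wish to implement. Thus it suffices to realize the $p$ UCGs $W_1,\ldots,W_p$.

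Applying the $W_i$ sequentially would cost depth $\Omega(pq)$, which is too large, so I would parallelize. Using Lemma \ref{lem:copy1} with string length $q$ and $p$ copies, I first spread the control register into $p$ disjoint copies in depth $O(\log p)$ and size $O(pq)$, consuming $O(pq)$ of the ancillary qubits; this is where the hypothesis $m\ge pq$ is used. Each $W_i$ then reads only its own copy of the control register and writes only to target qubit $i$. Because a UCG does not disturb its control qubits (it is block-diagonal in the control basis), the $p$ gates act on disjoint qubit sets and can be executed simultaneously, after which I uncompute the copies by running the copying circuit in reverse, restoring the ancillae to $\ket{0}$ in another $O(\log p)$ depth and $O(pq)$ size.

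For the cost analysis, I would split the remaining $m-pq$ ancillary qubits evenly among the $p$ parallel UCGs, giving roughly $m/p-q$ to each. By Lemma \ref{lem:UCG}, a $(q+1)$-qubit UCG with $m/p-q$ ancillas has depth $O\left(q+\frac{2^{q+1}}{(m/p-q)+(q+1)}\right)=O\left(q+\frac{p2^q}{m}\right)$ and size $O(2^q)$; running them in parallel preserves this depth while the sizes add to $O(p2^q)$. Combining with the two copying layers gives total depth $O\left(\log p+q+\frac{p2^q}{m}\right)$ and total size $O(p2^q)$, as claimed. The main point to get right is the resource balancing: the copies consume $pq$ ancillae, and the leftover budget must be apportioned so that each parallel UCG sees an effective ancilla count of $\Theta(m/p)$, which is precisely what collapses the denominator $(m/p-q)+(q+1)$ to $\Theta(m/p)$ and yields the $\frac{p2^q}{m}$ term. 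The only other point requiring care is verifying that the copies remain untouched throughout the parallel UCG layer, so that the final uncomputation is clean and the ancillae are correctly returned to $\ket{0}$.
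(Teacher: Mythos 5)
Your proposal is correct and follows essentially the same route as the paper's proof: factor the unitary into $p$ commuting single-target UCGs, replicate the control register via Lemma \ref{lem:copy1} using $pq$ ancillae, run the UCGs in parallel with the remaining $m-pq$ ancillae split evenly (Lemma \ref{lem:UCG}), and uncompute the copies. The cost accounting matches the paper's as well.
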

\begin{proof}
For any $x\in\{0,1\}^q$ and $y=y_1\cdots y_p\in\{0,1\}^p$, unitary $\sum_{x\in\{0,1\}^q}|x\rangle\langle x|\otimes L^x $ can be realized as follows.
 \begin{align}
    & \ket{x}\ket{y}\ket{0^m}\nonumber\\
    =&\ket{x}\left(\bigotimes_{i=1}^p\ket{y_i}\ket{0^q}_{\texttt{R}_i}\right)\ket{0^{m-pq}} \nonumber\\
    \xrightarrow{U_{copy}} & \ket{x}\left(\bigotimes_{i=1}^p\ket{y_i}\ket{x}_{\texttt{R}_i}\right)\ket{0^{m-pq}} & (\text{depth } O(\log p), \text{size } O(pq), \text{~Lemma~}\ref{lem:copy1}) \label{eq:multi-UCU1}\\
    \xrightarrow{\bigotimes_{i=1}^p\sum_{x} U_i^x\otimes |x\rangle_{\texttt{R}_i}\langle x|_{\texttt{R}_i}} & \ket{x}\left(\bigotimes_{i=1}^p U_i^x\ket{y_i}\ket{x}_{\texttt{R}_i}\right)\ket{0^{m-pq}} & (\text{depth } O\left(q+\frac{p2^q}{m}\right),~ \text{size } O(p2^q)) \label{eq:multi-UCU2}\\
    \xrightarrow{U_{copy}^\dagger} & \ket{x}\left(\bigotimes_{i=1}^pU_i^x\ket{y_i}\ket{0^q}_{\texttt{R}_i}\right)\ket{0^{m-pq}} & (\text{depth } O(\log p),  \text{size } O(pq), \text{~Lemma~}\ref{lem:copy1}) \label{eq:multi-UCU3}\\
    = & \ket{x}L^x\ket{y}\ket{0^m} \nonumber
\end{align}   

The first $pq$ ancillary qubits are divided into $p$ registers, which are labelled as register $\texttt{R}_1,\texttt{R}_2,\ldots,\texttt{R}_p$. Based on Lemma \ref{lem:copy1}, we make $p$ copies of $\ket{x}$, using a quantum circuit of depth $O(\log p)$ and size $O(pq)$ in Eq. \eqref{eq:multi-UCU1}. For every $i\in[p]$, we apply a $q$-UCG $\sum_{x\in\{0,1\}^q} U_i^x\otimes \ket{x}_{\texttt{R}_i}\langle x|_{\texttt{R}_i}$ on $\ket{y_i}\ket{x}_{\texttt{R}_i}$. All these $q$-UCGs act on different qubits, so they can be implemented in parallel, each with $\frac{m-pq}{p}$ ancillary qubits. According to Lemma \ref{lem:UCG},  Eq. \eqref{eq:multi-UCU2} can be realized by a quantum circuit of depth $O\left(q+\frac{2^q}{
(m-pq)/p+q}\right) = O\left(q+\frac{p2^q}{m}\right)$ and size $p\cdot O(2^q)=O(p2^q)$. In Eq. \eqref{eq:multi-UCU3}, we restore register $\texttt{R}_1,\texttt{R}_2,\ldots,\texttt{R}_p$ by the inverse circuit of Eq. \eqref{eq:multi-UCU1}, of depth $O(\log p)$ and size $O(pq)$. The total depth is $2\cdot O(\log p)+O\left(q+\frac{p2^q}{m}\right)=O\left(\log p+q+\frac{p2^q}{m}\right)$ and the total size is $2\cdot  O\left(pq\right)+O\left(p2^q\right)=O(p2^q)$.
\end{proof}

%Two remarks are in order. First, 
We can compare this result to Theorem 2 in \cite{low2018trading}, which says that the unitary $\sum_{x\in\{0,1\}^q}\ket{x}\bra{x}\otimes L^x$ can be implemented by a quantum circuit of depth $O(q^2+\frac{p2^q}{m})$ using $m$ ancillary qubits. Apart from the difference between their depth bound and ours, the assumptions are also different. On one hand, our construction works for any  $m\ge pq$, while theirs needs $\sqrt{p2^q}\le m\le p2^q$. On the other hand, ours takes $m$ ``clean'' qubits of $\ket{0}$ (and restore them afterwards), while they can handle ``dirty'' qubits, i.e. those with unknown content before the circuit.

\subsection{Rosenthal's quantum state preparation framework}
\label{sec:rosenthal}

In \cite{rosenthal2021query} Rosenthal presents a $\textsf{QAC}^0_f$ circuit of depth $O(n)$ with $O(n 2^n)$ ancillary qubits for $n$-qubit QSP. {As mentioned in \cite{rosenthal2021query}, this result suffices to yield a standard quantum circuit for QSP, with depth $O(n)$ and $O(n2^n)$ ancillary qubits. Indeed, each $k$-qubit Toffoli or fanout gate can be simulated by a standard quantum circuit of depth $O(\log k)$ with $O(k)$ ancillary qubits (Lemma \ref{lem:tof}). However, the $\textsf{QAC}^0_f$ circuit needs $O(n2^n)$ ancillary qubits, which is out of our parameter regime of $m \in \left[\omega\left(\frac{2^n}{n \log n}\right), o(2^n)\right]$.}

Next we will analyze the $\textsf{QAC}^0_f$ circuit and see how to make it suitable for any $m=\Omega(2^n/n^2)$. Let us first review Rosenthal's framework.
In the following, we will use $\epsilon$ to denote the empty string. Let $\{0,1\}^{\le n}\defeq\bigcup_{i=1}^n\{0,1\}^i \cup \{\epsilon\}$ denote the set of $\{0,1\}$ strings of length at most $n$, and $\{0,1\}^{< n}\defeq\bigcup_{i=1}^{n-1}\{0,1\}^i \cup \{\epsilon\}$ denote the set of $\{0,1\}$ strings of length at most $n-1$. For any $x=x_1x_2\cdots x_n\in\{0,1\}^n$, let $x_{\le i}$ denote the $i$-bit string $x_1x_2\cdots x_i$ and $x_{<i}$ denote the $(i-1)$-bit string $x_1x_2\cdots x_{i-1}$. {In particular, $x_{<1} = \epsilon$, the empty string.} Let $R(\alpha)$ denote a single-qubit gate $R(\alpha)=\left[\begin{array}{cc}
    1 &  \\
     & e^{i\alpha}
\end{array}\right]$ for any $\alpha\in\mathbb{R}$, which puts a phase of $\alpha$ on $\ket{1}$ basis.

Let $\ket{\psi_v}=\sum_{x\in\{0,1\}^n}v_x\ket{x}$ denote the target quantum state. For all $x\in\{0,1\}^{<n}$, let $|x|$ denote the length of $x$. Define $(n-|x|)$-qubit states $\{\ket{\psi_x}: 0 \le |x| < n\}$ recursively by the equations
\[\ket{\psi_\epsilon} = \ket{\psi_v} \ \text{ and }\ \ket{\psi_x}=\left\{\begin{array}{ll}
     \beta_{x0}\ket{0}\ket{\psi_{x0}}+\beta_{x1}\ket{1}\ket{\psi_{x1}},&\text{if }|x|\le n-2,   \\
    \beta_{x0}\ket{0}+\beta_{x1}\ket{1}, & \text{if }|x|=n-1, 
\end{array}\right.\]
It can be verified that $v_x=\prod_{i=1}^n\beta_{x_{\le i}}$ %\syz{Should the ``$\le i$'' be a subscript of $x$?} 
for all $x\in\{0,1\}^n$. For all $x\in\{0,1\}^{<n}$, further define a one-qubit quantum state
\begin{equation}\label{eq:phix}
    \ket{\phi_x}=\beta_{x0}\ket{0}+\beta_{x1}\ket{1}.
\end{equation}

Next let us define a \textit{leaf function} $\ell:\{0,1\}^{\{0,1\}^{<n}}\to \{0,1\}^n$. The set $\{0,1\}^{\{0,1\}^{<n}}$ consists of all bit strings of length $|{\{0,1\}^{<n}}|=2^n-1$. Each string in $\{0,1\}^{\{0,1\}^{<n}}$ has its bits indexed by elements in ${\{0,1\}^{<n}}$. For example, $\{0,1\}^{\{0,1\}^{<3}}$ consists of $7$-bit strings $z:=z_{\epsilon}z_0z_1z_{00}z_{01}z_{10}z_{11}$, where $z_x\in\{0,1\}$ for all $x\in {\{0,1\}^{<3}}$. The leaf function is defined in the following way: Identify the input index set $\{0,1\}^{<n}$ with the {interior} vertices of the complete binary tree {of height $n$}, with each interior vertex $x$ having the left and right children $x0$ and $x1$, respectively. The root corresponds to the empty string $\epsilon$. Given an input $z$, $\ell(z)$ is the leaf that the following walk from the root leads to: at any interior node $x$, move to the left or right child if $z_x = 0$ or $1$, respectively. For example, given an input $z:=z_{\epsilon}z_0z_1z_{00}z_{01}z_{10}z_{11}$=0110010, $\ell(z)$ is obtained as follows. First, since $z_\epsilon=0$, we move to the left child of $\epsilon$, which is labeled by $0$. Second, since $z_0=1$, we move to the right child of $0$, which is labeled by $01$. {Third, since $z_{01}=0$, we move to the left child of $01$, which is labeled by $010$ and} is the leaf node $\ell(z)$, i.e. $\ell(z)=01{0}$. It can be verified that
\[\ell(z)_j=\bigvee_{t\in \{0,1\}^j:t_j=1}\bigwedge_{i\in[j]}[z_{t_{<i}}=t_i],\forall j\in[n], \]
where
\[[z_{t_{<i}}=t_i]=\begin{cases} 1 & \text{ if } z_{t_{<i}}=t_i, \\ 0 & \text{ if }z_{t_{<i}}\neq t_i \end{cases}.\] 
Also define a corresponding $(2^n+n-1)$-qubit unitary transformation $U_\ell$ by
\begin{equation}\label{eq:U_ell}
    U_{\ell}\ket{z,a}=\ket{z,a\oplus \ell(z)},\quad \forall z\in\{0,1\}^{\{0,1\}^{<n}},\forall a\in\{0,1\}^n.
\end{equation}
In the rest of this section, $\texttt{R}_x$ is a one-qubit register for each $x\in\{0,1\}^{<n}$ %\syz{Should this be $<n$?}
, and $\texttt{S}$ is an $n$-qubit register.

The QSP algorithm in \cite{rosenthal2021query} can be summarized as follows.
\begin{lemma}\label{lem:rosenthal-alg}
    Any $n$-qubit quantum state $\ket{\psi}$ can be generated by the following three steps:
    \begin{enumerate}
        \item $\ket{0}_{{\tt R}_x} \to \ket{\phi_x}_{{\tt R}_x}$, for all $x\in \{0,1\}^{< n}$.
        \item Apply $U_\ell$ to $\bigotimes_{x\in \{0,1\}^{< n}}  \ket{\phi_x}_{{\tt R}_x}\otimes \ket{0^n}_{{\tt S}}$.
        \item Apply $\Gamma^\dagger$, where $\Gamma$ is any unitary satisfying
        \begin{align}\label{eq:Gamma}
            &\ket{t}_{{\tt S}}\bigotimes_{x\in\{0,1\}^{<n}} \ket{0}_{{\tt R}_x}  
            \xrightarrow{\Gamma}
            \ket{t}_{\tt{S}}\bigotimes_{x\in\{0,1\}^{<n}}\left\{
            \begin{array}{ll}
                \ket{t_i}_{{\tt R}_x}&  \text{if } x=t_{<i} \text{ for some } i\in[n],\\
                \ket{\phi_x}_{{\tt R}_x},  & \text{otherwise},
            \end{array}\right. \forall t\in\{0,1\}^n.
        \end{align} 
    \end{enumerate}
\end{lemma}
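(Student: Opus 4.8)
The plan is to propagate the joint state of the registers $\{\texttt{R}_x : x\in\{0,1\}^{<n}\}$ and $\texttt{S}$ through the three steps and show that step~3 exactly undoes the entanglement created in steps~1 and~2, leaving $\ket{\psi_v}$ in $\texttt{S}$ with every $\texttt{R}_x$ restored to $\ket{0}$. First I would expand the state produced by step~1. Since $\ket{\phi_x}=\beta_{x0}\ket{0}+\beta_{x1}\ket{1}$, the tensor product over all index registers factors into a sum over assignments $z\in\{0,1\}^{\{0,1\}^{<n}}$,
\[
\bigotimes_{x\in\{0,1\}^{<n}} \ket{\phi_x}_{\texttt{R}_x}\otimes\ket{0^n}_\texttt{S}
= \sum_{z} \Big(\prod_{x} \beta_{x z_x}\Big)\ket{z}\otimes\ket{0^n}_\texttt{S},
\]
which is the input to $U_\ell$.

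Next, applying $U_\ell$ as defined in Eq.~\eqref{eq:U_ell} writes the leaf $\ell(z)$ into the register $\texttt{S}$ (starting from $\ket{0^n}$, the XOR just loads $\ell(z)$), giving $\sum_z (\prod_x \beta_{x z_x})\,\ket{z}\ket{\ell(z)}_\texttt{S}$. The central move is to regroup this exponential sum according to the leaf value $t=\ell(z)\in\{0,1\}^n$. Here I would invoke the characterization of the leaf function: the root-to-leaf walk reaches $t$ precisely when at each proper prefix $t_{<i}$ the steering bit $z_{t_{<i}}$ equals $t_i$, i.e.\ $\ell(z)=t$ iff $z_{t_{<i}}=t_i$ for all $i\in[n]$. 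Thus $\ell^{-1}(t)$ constrains exactly the $n$ registers indexed by the prefixes $t_{<1},\ldots,t_{<n}$ and leaves every other register free.

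I would then split the amplitude product along this partition. Over the constrained prefix registers the product telescopes, $\prod_{i=1}^n \beta_{t_{<i}t_i}=\prod_{i=1}^n\beta_{t_{\le i}}=v_t$, using the identity $v_x=\prod_{i=1}^n\beta_{x_{\le i}}$ recorded after Eq.~\eqref{eq:phix}; over each free register $x$ the sum over $z_x\in\{0,1\}$ reconstitutes $\ket{\phi_x}$. Hence the state after step~2 is
\[
\sum_{t\in\{0,1\}^n} v_t\,\ket{t}_\texttt{S}\bigotimes_{x\in\{0,1\}^{<n}}
\begin{cases}
\ket{t_i}_{\texttt{R}_x} & \text{if } x=t_{<i}\ \text{for some } i\in[n],\\
\ket{\phi_x}_{\texttt{R}_x} & \text{otherwise.}
\end{cases}
\]
By the defining relation Eq.~\eqref{eq:Gamma} for $\Gamma$, each summand equals $\Gamma\big(\ket{t}_\texttt{S}\bigotimes_x\ket{0}_{\texttt{R}_x}\big)$, so by linearity the whole state is $\Gamma\big(\ket{\psi_v}_\texttt{S}\bigotimes_x\ket{0}_{\texttt{R}_x}\big)$. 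Applying $\Gamma^\dagger$ in step~3 therefore yields $\ket{\psi_v}_\texttt{S}\bigotimes_x\ket{0}_{\texttt{R}_x}$, which is exactly the target state with all ancillary registers returned to $\ket{0}$.

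I expect the only delicate point to be the regrouping and amplitude bookkeeping in the middle: one must check that $\ell^{-1}(t)$ factorizes the full index set $\{0,1\}^{<n}$ into the $n$ disjoint constrained prefixes (contributing the scalar $v_t$) and the remaining free part (contributing $\bigotimes_x\ket{\phi_x}$) with no double counting, so that the sum over $z$ collapses cleanly into the $\Gamma$-image. Once this combinatorial factorization is in hand, the remaining steps---expanding the tensor product and matching the result against the definitions of $U_\ell$ and $\Gamma$---are routine.
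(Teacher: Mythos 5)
Your argument is correct. Note that the paper itself does not prove this lemma at all --- it explicitly defers correctness to \cite{rosenthal2021query} and only discusses implementation --- so your derivation (expand the product state over steering strings $z$, apply $U_\ell$, regroup by leaf value $t$ using the characterization $\ell(z)=t$ iff $z_{t_{<i}}=t_i$ for all $i\in[n]$, telescope the constrained amplitudes to $v_t=\prod_{i}\beta_{t_{\le i}}$ while resumming the free registers into $\ket{\phi_x}$, and recognize the result as $\Gamma\bigl(\ket{\psi_v}_{\texttt{S}}\otimes\bigotimes_x\ket{0}_{\texttt{R}_x}\bigr)$) supplies exactly the standard correctness proof that the paper delegates to its reference, and the key combinatorial point you flag --- that $\ell^{-1}(t)$ constrains precisely the $n$ prefix registers and leaves the rest free, with no double counting --- is indeed the crux and holds as you describe.
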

For correctness please refer to \cite{rosenthal2021query}, and here we focus on the implementation and the corresponding analysis in a way suitable for our later circuit construction. 

The first step of the algorithm consists of single-qubit rotations on $2^n-1$ qubits, and thus naturally has depth 1 and size $2^n-1$. We denote by $L_1^v$ this step of operation, where the superscript emphasizes that the gate parameters depend on the target vector $v\in \mathbb{C}^{2^n}$. 

As shown in \cite{rosenthal2021query}, the second step can be implemented by a $\textsf{QAC}_f^0$ circuit on $O(n2^n)$ qubits, which transfers to a standard circuit of depth $O(n)$ and size $O(n2^n)$, with $O(n2^n)$ ancillary qubits. We also note that this second step is independent of the target state.  We denote by $C_1'$ the circuit of this step, where the absence of superscript $v$ emphasizes the independence of the target state. 

The third step, though also of depth $O(n)$ and size $O(n2^n)$ with $O(n2^n)$ ancillary qubits, unfortunately, depends on the target vector $v$. This brings us some difficulty for small $m$, and we will show how to handle it next.

\subsection{Implementation: separating depth and dependence on the target state}
In this section we will show how to implement the third step in Rosenthal's algorithm in such a way that (1) it has a constant number of rounds, some deep and some shallow, (2) deep rounds have depth $O(n)$, but are independent of the target vector $v$, (3) shallow rounds each have depth $1$, and depend on $v$. This separation of depth and dependence is useful for our later construction of efficient circuits.
The circuit and these conditions are formalized in the following lemma. 
\begin{lemma}\label{lem:app_controlled}
A unitary transformation $\Gamma^\dagger$ satisfying Eq.\eqref{eq:Gamma} can be implemented by a standard quantum circuit of the following form 
\[\Gamma^\dagger = C_5 L_5^v C_4 L_4^v C_3 L_3^v C_2 L_2^v C_1''.\]
Here each $L_i^v=\bigotimes_{k=1}^{s_i} U^{i,v}_k$ is a depth-1 circuit consisting of $s_i=O(2^n)$ single-qubit gates with $U^{i,v}_k$ determined by $\ket{\psi_v}$. $C_1''$, $C_2, \ldots, C_5$ are all independent of $\ket{\psi_v}$; $C_1''$, $C_4$ and $C_5$ are circuits of depth $O(n)$ and size $O(n2^n)$, and $C_2$ and $C_3$ are circuits of depth $O(1)$ and size $O(2^n)$. {The circuit uses  $O(n2^n)$ ancillary qubits.}
\end{lemma}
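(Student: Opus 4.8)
The plan is to unpack the definition of $\Gamma$ in Eq.~\eqref{eq:Gamma} and reverse-engineer a circuit for $\Gamma^\dagger$ that isolates the $v$-dependence into thin single-qubit layers. First I would analyze the target transformation: on input $\ket{t}_{\texttt{S}}\bigotimes_x \ket{0}_{\texttt{R}_x}$, the map $\Gamma$ must, for each interior node $x$, place the bit $t_i$ into register $\texttt{R}_x$ precisely when $x$ lies on the root-to-leaf path of $t$ (i.e. $x=t_{<i}$), and otherwise load the $v$-dependent state $\ket{\phi_x}$ defined in Eq.~\eqref{eq:phix}. Since $\Gamma^\dagger$ undoes this, the natural strategy is to write $\Gamma$ as a composition, then dualize. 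The key conceptual move is that ``load $\ket{\phi_x}$ on the off-path registers'' can be decoupled from ``copy the path bits $t_i$ into the on-path registers'': the former is a product of single-qubit rotations (depth $1$, $v$-dependent), while the latter is a permutation-type, $v$-independent controlled operation governed by the path structure, which is exactly the kind of fan-out/copy computation that Rosenthal's $\textsf{QAC}_f^0$ construction handles in depth $O(n)$.

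The main steps I would carry out are as follows. First, start every off-path register in $\ket{\phi_x}$ by applying a depth-$1$ layer $L^v$ of single-qubit gates to all of $\{\texttt{R}_x\}$; this accounts for one of the $L_i^v$ factors. Second, compute the path predicates: for each $x$ and each candidate position $i$, the indicator $[x=t_{<i}]$ is a conjunction of equality checks against the bits of $t$ held in register $\texttt{S}$, so these can be evaluated coherently into scratch ancillae using $O(n)$-depth, $v$-independent Toffoli/fan-out circuitry (this is where $C_1''$, $C_4$, $C_5$ of depth $O(n)$ and size $O(n2^n)$ live, invoking Lemma~\ref{lem:tof} to turn the unbounded-fan-in gates into standard gates). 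Third, conditioned on being on the path, I must \emph{replace} the already-loaded $\ket{\phi_x}$ in $\texttt{R}_x$ with the correct basis bit $\ket{t_i}$; the cleanest way is to first uncompute $\ket{\phi_x}\to\ket{0}$ on the selected registers by a controlled single-qubit rotation and then copy in $t_i$ by a CNOT from $\texttt{S}$, with the controlled rotations supplying additional $L_i^v$ layers. Finally, I would take the inverse of the whole composition to obtain $\Gamma^\dagger$ and group the factors so that the $v$-dependent single-qubit layers become $L_2^v,\dots,L_5^v$ and the shallow $v$-independent structural glue becomes $C_2,C_3$ of depth $O(1)$.

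The hard part will be the third step: arranging the controlled ``erase $\ket{\phi_x}$, then write $t_i$'' so that the $v$-dependence factors through \emph{unconditional} depth-$1$ single-qubit layers rather than through the controls, since a rotation conditioned on the path predicate would entangle the (unknown) angle with the (target-independent) control logic and spoil the separation. The trick I expect to need is to push each $v$-dependent rotation outside the controlled block by conjugation: write the controlled-replacement as $L^v \cdot (\text{$v$-independent controlled permutation}) \cdot (L^v)^{\dagger}$, so that applying a global single-qubit layer, then a purely structural $v$-independent circuit, then another global single-qubit layer reproduces the desired selective action. Verifying that such a conjugation exists—i.e. that the selective overwrite of $\ket{\phi_x}$ by $\ket{t_i}$ really can be implemented as a $v$-independent operation sandwiched between two $v$-dependent depth-$1$ layers—is the crux, and once it is established the depth and size accounting follows directly from Lemmas~\ref{lem:tof} and the structure of Rosenthal's $O(n)$-depth, $O(n2^n)$-size construction.
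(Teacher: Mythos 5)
Your architecture matches the paper's: load $\ket{\phi_x}$ unconditionally with a depth-$1$ $v$-dependent layer, compute the path predicates $[x=t_{<i}]$ into ancillae with $v$-independent $O(n)$-depth Toffoli circuitry (the paper also makes this parallelizable by first producing $2^n-1$ copies of $\ket{t}$ via Lemma~\ref{lem:copy1}, a step you gesture at via ``fan-out'' but should make explicit, since $2^n-1$ predicates cannot all read the same $n$-qubit register $\texttt{S}$ in depth $O(n)$), then conditionally erase $\ket{\phi_x}$ and write $t_i$, and finally invert and regroup. However, the step you yourself flag as the crux --- realizing the predicate-controlled, $v$-dependent erasure as \emph{unconditional} depth-$1$ $v$-dependent layers interleaved with $v$-independent gates --- is left unverified, and the specific ansatz you propose, a conjugation $L^v\cdot(\text{$v$-independent controlled gate})\cdot (L^v)^\dagger$, does not by itself realize a general controlled single-qubit gate: conjugating a controlled-$X$ yields only controlled reflections, i.e.\ controlled-$V$ for $V$ with eigenvalues $\pm 1$. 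It happens that a reflection mapping $\ket{\phi_x}$ to $\ket{0}$ exists (a Householder reflection about $\ket{\phi_x}-\ket{0}$ suitably phased), so your route can be completed, but that existence argument is exactly the missing piece and you do not supply it.

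The paper closes this gap differently and more generically: it first compresses the multi-qubit predicate into a single ancilla bit $\texttt{A}_x$ by a $v$-independent multi-controlled Toffoli ($W_x^1$, depth $O(n)$ by Lemma~\ref{lem:tof}), and then implements the controlled-$U_x^\dagger$ from that one control using the factorization of Lemma~\ref{lem:1Qfactor}, $V=e^{i\alpha}AXBXC$ with $ABC=\mathbb{I}_1$: the three distinct $v$-dependent gates $C,B,A$ (plus a controlled phase $R(\alpha)$ absorbed as a single-qubit gate on $\texttt{A}_x$) are applied \emph{unconditionally} as depth-$1$ layers, interleaved with two $v$-independent CNOTs from $\texttt{A}_x$ to $\texttt{R}_x$ --- the condition $ABC=\mathbb{I}_1$ guarantees the net action is trivial when the predicate is off. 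This is why the statement has exactly the layers $C_2,C_3$ of depth $O(1)$ (the two CNOT layers) and four $v$-dependent layers $L_2^v,\dots,L_5^v$. Your conjugation variant would produce a slightly different but equally usable constant number of layers; the point is that some such identity must be proved, not merely expected, for the lemma to hold.
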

\begin{proof}
We first introduce notation $C^{\texttt{S},y}_{\texttt{R}_x}(V)$: For any $y=y_1\cdots y_\ell \in\{0,1\}^{\le n}$, $C^{\texttt{S},y}_{\texttt{R}_x}(V)$ is a unitary operation acting on an $n$-qubit register $\texttt{S}$ (the first $n$ qubits) and 1-qubit register $\texttt{R}_x$ (the last qubit) as follows \[C^{\texttt{S},y}_{\texttt{R}_x}(V)\defeq \ket{y}\langle y|\otimes \mathbb{I}_{n-\ell}\otimes V + \sum_{y'\in\{0,1\}^\ell-\{y\}}\ket{y'}\langle y'|\otimes \mathbb{I}_{n-\ell}\otimes \mathbb{I}_1,\]
The unitary $C^{\texttt{S},y}_{\texttt{R}_x}(V)$ makes the following transformation
\begin{equation}\label{eq:multi-control}
\ket{t}_{\texttt{S}}\ket{0}_{\texttt{R}_x}\to\ket{t}_{\texttt{S}}V^{[t_{\le \ell }=y]}\ket{0}_{\texttt{R}_x},
\end{equation}
where
\[ [t_{\le\ell}=y]\defeq \left\{\begin{array}{ll}
    1, & \text{if~} t_{\le\ell}=y, \\
    0, & \text{if~} t_{\le\ell}\neq y.
\end{array}\right.\]

By introducing an ancillary qubit called register $\texttt{A}$, $C^{\texttt{S},y}_{\texttt{R}_x}(V)$ can be implemented by the quantum circuit in Figure \ref{fig:C^S_Rx}. In the quantum circuit, the single-qubit gate $A,B,C,R(\alpha)$ satisfy $V=e^{i\alpha}AXBXC$ and $ABC=\mathbb{I}_1$ (Lemma \ref{lem:1Qfactor}). Indeed, for the first $\ell$ qubits being $y$, register {\texttt A} becomes $\ket{1}$, which triggers the later application of $AXBXC$ to $\texttt{R}_x$ followed by a global phase $e^{i\alpha}$. Registers $\texttt{R}_x$ and $\texttt{A}$ are then restored. If the first $\ell$ qubits are not $y$, then register $\texttt{A}$ remains $\ket{0}$ all the way, and on register $\texttt{R}_x$ is applied $ABC$, which also equals to $I$.

\begin{figure}[!ht]
    \centering
\centerline{
\Qcircuit @C=1.2em @R=0.5em {
&\gate{\scriptstyle X^{y_1}} & \ctrl{1} & \qw & \qw & \qw & \qw & \qw & \ctrl{1} & \gate{\scriptstyle X^{y_1}} & \qw &\\
&\gate{\scriptstyle X^{y_2}} & \ctrl{1} & \qw &\qw & \qw & \qw & \qw &  \ctrl{1} &\gate{\scriptstyle X^{y_2}} &  \qw &\\
 &  & \vdots &  & & \cdots &   &   & \vdots &   & &\\
&\gate{\scriptstyle X^{y_\ell}} &\ctrl{5} &\qw &\qw & \qw & \qw & \qw &  \ctrl{5} & \gate{\scriptstyle X^{y_\ell}} & \qw &\\
& \qw & \qw & \qw &\qw & \qw & \qw & \qw & \qw & \qw & \qw &\\
 &  & & & & & & & & &\\
& \qw & \qw & \qw &\qw & \qw & \qw & \qw & \qw & \qw &\qw \inputgroupv{1}{6}{2em}{3 em}{ \scriptstyle \text{Register}~~\texttt{S}~~~~~~~~~~~~}\\
\lstick{\scriptsize \text{Register~} \texttt{R}_x~~~~~}& \qw &\qw & \gate{\scriptstyle C} & \targ &\gate{\scriptstyle B} &\targ & \gate{\scriptstyle A} &\qw & \qw & \qw &\\
\lstick{\scriptstyle \text{Register~} \texttt{A}~\ket{0}~ }& \qw & \targ &\qw& \ctrl{-1} &\qw & \ctrl{-1} & \gate{\scriptstyle R(\alpha)}&\targ &\qw & \qw & \rstick{\scriptstyle \ket{0}}\\
}
}
    \caption{Quantum circuit for $C^{\texttt{S},y}_{\texttt{R}_x}(V)$. Register $\texttt{A}$ is an ancillary qubit. Single-qubit gate $A,B,C,R(\alpha)$ satisfy $V=e^{i\alpha}AXBXC$ and $ABC=\mathbb{I}_1$.}
    \label{fig:C^S_Rx}
\end{figure}
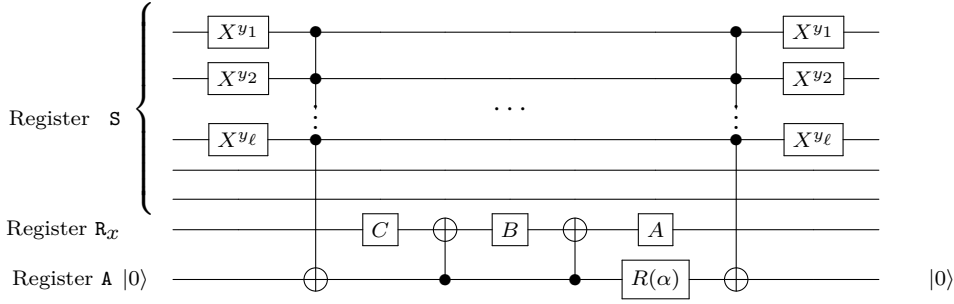

According to Figure \ref{fig:C^S_Rx}, we can rewrite the circuit of $C^{\texttt{S},y}_{\texttt{R}_x}(V)$:
\begin{multline}\label{eq:C^S_Rx}
C^{\texttt{S},y}_{\texttt{R}_x}(V)= W_y^2\underbrace{(\mathbb{I}_n\otimes A\otimes R(\alpha))}_{D^3_{V}}(\mathbb{I}_n\otimes CNOT^{\texttt{A}}_{\texttt{R}_x})\underbrace{(\mathbb{I}_n\otimes B\otimes \mathbb{I}_1)}_{D^2_{V}}(\mathbb{I}_n\otimes CNOT^{\texttt{A}}_{\texttt{R}_x}) \underbrace{(\mathbb{I}_n\otimes C\otimes \mathbb{I}_1)}_{D^1_{V}}W_y^1,
\end{multline}
%where $\texttt{A}$ is the control qubit, and $\texttt{R}_x$ is the target qubit of CNOT gate $CNOT^{\texttt{A}}_{\texttt{R}_x}$.
where $W_y^1,W_y^2$ are defined as
\begin{align*}
    & W_y^1:={\sf Tof}^{\texttt{S}}_\texttt{A}(\bigotimes_{i=1}^{y_\ell}X^{y_i}),\\
    & W_y^2:=(\bigotimes_{i=1}^{y_\ell}X^{y_i}){\sf Tof}^{\texttt{S}}_\texttt{A},
\end{align*}
and ${\sf Tof}^{\texttt{S}}_\texttt{A}$ is a Toffoli gate whose control qubits are in $\texttt{S}$ and target qubit is in $\texttt{A}$.
Because any $n$-qubit Toffoli gate can be implemented by a quantum circuit of depth $O(n)$ based on Lemma \ref{lem:tof}, $W_y^1,W_y^2$ can be implemented by a quantum circuit of depth  $O(n)$. Unitary $D_v^1,D_V^2$ consists of single-qubit gates and $D_V^3$ consists of $2$ single-qubit gates. The total depth of $C^{\texttt{S},y}_{\texttt{R}_x}(V)$ is $O(n)$. 
It is worth mentioning that in the circuit construction of $C^{\texttt{S},y}_{\texttt{R}_x}(V)$, only $D_V^1,D_V^2, D_V^3$ depend on unitary $V$. %This is an important fact for our construction of quantum state preparation circuit.

Now we start the circuit construction of $\Gamma$. For all $x\in\{0,1\}^{<n}$, let $U_x$ denote a single-qubit gate satisfying $U_x\ket{0}=\ket{\phi_x}$.
First, we implement the following transformation $\Gamma_{\texttt{R}_x}^{\texttt{S},x}$ on register $\texttt{S}$ and $\texttt{R}_x$ by the method in Figure \ref{fig:C^S_Rx}:
\begin{equation}\label{eq:a}
 \ket{t}_{\texttt{S}}\ket{0}_{\texttt{R}_x}\to \ket{t}_{\texttt{S}}\begin{cases}
   \ket{t_i}_{\texttt{R}_x}, &  \text{if~} x=t_{<i}\text{~for~some~} i\in[n],\\
  \ket{\phi_x}_{\texttt{R}_x},  & \text{otherwise,}
\end{cases}
\qquad \forall t\in \{0,1\}^n.   
\end{equation}
This needs depth and size $O(n)$ with 1 ancillary qubit:
\begin{align}
    & \ket{t}_{\texttt{S}}\ket{0}_{\texttt{R}_x}\ket{0}_{\texttt{A}} \nonumber\\
    \xrightarrow{\mathbb{I}_n\otimes U_x} & \ket{t}_{\texttt{S}}\ket{\phi_x}_{\texttt{R}_x}\ket{0}_{\texttt{A}} & (\text{depth~} 1, \text{~size~} 1) \label{eq:Gamma_x1}\\
    \xrightarrow{C^{\texttt{S},x}_{\texttt{R}_x}(U_x^\dagger)} & \ket{t}_{\texttt{S}}\begin{cases}
   \ket{0}_{\texttt{R}_x}, &  \text{if~} x=t_{<i}\text{~for~some~} i\in[n]\\
  \ket{\phi_x}_{\texttt{R}_x},  & \text{otherwise}
  \end{cases}
  \ket{0}_{\texttt{A}} & (\text{depth~} O(n), \text{~size~} O(n)) \label{eq:Gamma_x2}\\
    \xrightarrow{C^{\texttt{S},x1}_{\texttt{R}_x}(X)} & \ket{t}_{\texttt{S}}\begin{cases}
   \ket{t_i}_{\texttt{R}_x}, &  \text{if~} x=t_{<i}\text{~for~some~} i\in[n]\\
  \ket{\phi_x}_{\texttt{R}_x},  & \text{otherwise}
\end{cases}   \ket{0}_{\texttt{A}} & (\text{depth~} O(n),\text{~size~} O(n)) \label{eq:Gamma_x3}
\end{align}

For each $x\in\{0,1\}^{<n}$, $\texttt{S}_x$ denotes a register which stores a copy of $\ket{t}$ in register $\texttt{S}$. Base on the construction of $\Gamma_{\texttt{R}_x}^{\texttt{S}_x,x}$, we can now implement $\Gamma$ by a quantum circuit of depth $O(n)$ and size $O(n2^n)$ with $(n+1)(2^n-1)$ ancillary qubits.
To compress the depth, we first make a copy of $\ket{t}$ for each $x\in\{0,1\}^{<n}$.
\begin{align}
    &\ket{t}_{\texttt{S}}\left(\bigotimes_{x\in\{0,1\}^{<n}}\ket{0}_{\texttt{R}_x}\right)\ket{0^{(n+1)(2^n-1)}} \nonumber\\
    =&\ket{t}_{\texttt{S}}\bigotimes_{x\in\{0,1\}^{<n}}\left(\ket{0}_{\texttt{R}_x}\ket{0^n}_{\texttt{S}_x}\ket{0}_{\texttt{A}_x}\right) \label{eq:a1}\\
    \xrightarrow{U_{copy}}& \ket{t}_{\texttt{S}}\bigotimes_{x\in\{0,1\}^{<n}}\left(\ket{0}_{\texttt{R}_x}\ket{t}_{\texttt{S}_x}\ket{0}_{\texttt{A}_x}\right) &  \label{eq:a2}\\
    \xrightarrow{\bigotimes \limits_{x\in\{0,1\}^{<n}} \Gamma_{\texttt{R}_x}^{\texttt{S}_x,x}}& \ket{t}_{\texttt{S}}\bigotimes_{x\in\{0,1\}^{<n}}\left(\left\{\begin{array}{ll}
   \ket{t_i}_{\texttt{R}_x},&  \text{if~} x=t_{<i}\text{~for~some~} i\in[n],\\
  \ket{\phi_x}_{\texttt{R}_x},  & \text{otherwise}.
    \end{array}\right.\ket{t}_{\texttt{S}_x}\ket{0}_{\texttt{A}_x}\right) & \label{eq:a3}\\
    \xrightarrow{U^\dagger_{copy}}& \ket{t}_{\texttt{S}}\bigotimes_{x\in\{0,1\}^{<n}}\left(\left\{\begin{array}{ll}
   \ket{t_i}_{\texttt{R}_x},&  \text{if~} x=t_{<i}\text{~for~some~} i\in[n],\\
   \ket{\phi_x}_{\texttt{R}_x},  & \text{otherwise}.
    \end{array}\right.\ket{0^n}_{\texttt{S}_x}\ket{0}_{\texttt{A}_x}\right)& \label{eq:a4}\\
    = & \ket{t}_{\texttt{S}} \left(\bigotimes_{x\in\{0,1\}^{<n}}
    \begin{cases} \ket{t_i}_{\texttt{R}_x},&  \text{if~} x=t_{<i}\text{~for~some~} i\in[n],\\
  \ket{\phi_x}_{\texttt{R}_x},  & \text{otherwise}.
    \end{cases} \right)\ket{0^{(n+1)(2^n-1)}} .\nonumber
\end{align}
Here all the three transformation steps have depth $O(n)$ and size $O(n2^n)$, by Lemma \ref{lem:copy1} and the analysis in Eq.\eqref{eq:Gamma_x1}-\eqref{eq:Gamma_x3}.

For every $U^\dagger_x$, $C^{S_x,x}_{\texttt{R}_x}(U_x^\dagger)$ can be represented as
\begin{equation}\label{eq:CSx}
  C^{S_x,x}_{\texttt{R}_x}(U_x^\dagger)=W_x^2 D_{U_x^\dagger}^3CNOT^{\texttt{A}_x}_{\texttt{R}_x} D_{U_x^\dagger}^2 CNOT^{\texttt{A}_x}_{\texttt{R}_x}D_{U_x^\dagger}^1W_x^1,  
\end{equation}
as discussed in Eq. \eqref{eq:C^S_Rx}. $D_{U_x^\dagger}^1,D_{U_x^\dagger}^2$ are single-qubit gate respectively and $D_{U_x^\dagger}^3$ consists of 2 single-qubit gates which are determined by $U_x^\dagger$ (or by target quantum state $\ket{\psi_v}$). $W_x^1,W_x^2$ are quantum circuits of depth $O(n)$, independent of $\ket{\psi_v}$.
As discussed above, $\Gamma$ is represented as
 \begin{align*}
\Gamma&=U_{copy}^\dagger \left(\bigotimes_{x\in\{0,1\}^{<n}}\Gamma^{S_x,x}_{\texttt{R}_x}\right) U_{copy},\\
     & =U_{copy}^\dagger \bigotimes_{x\in\{0,1\}^{<n}}(C^{\texttt{S}_x,x1}_{\texttt{R}_x}(X) W_x^2 D_{U_x^\dagger}^3CNOT^{\texttt{A}_x}_{\texttt{R}_x} D_{U_x^\dagger}^2 CNOT^{\texttt{A}_x}_{\texttt{R}_x}D_{U_x^\dagger}^1W_x^1 U_x) U_{copy},\\
     & =\underbrace{U_{copy}^\dagger \bigotimes_{x\in\{0,1\}^{<n}}(C^{\texttt{S}_x,x1}_{\texttt{R}_x}(X) W_x^2)}_{(C_1'')^\dagger} \underbrace{\bigotimes_{x\in\{0,1\}^{<n}}D_{U_x^\dagger}^3}_{(L_2^{v})^\dagger}\underbrace{\bigotimes_{x\in\{0,1\}^{<n}} CNOT^{\texttt{A}_x}_{\texttt{R}_x}}_{(C_2)^\dagger}\\ & \underbrace{\bigotimes_{x\in\{0,1\}^{<n}}D_{U_x^\dagger}^2}_{(L_3^v)^\dagger}  \underbrace{\bigotimes_{x\in\{0,1\}^{<n}} CNOT^{\texttt{A}_x}_{\texttt{R}_x}}_{(C_3)^\dagger} \underbrace{\bigotimes_{x\in\{0,1\}^{<n}}D_{U_x^\dagger}^1 }_{(L_4^v)^\dagger}\underbrace{\bigotimes_{x\in\{0,1\}^{<n}} W_x^1}_{(C_4)^\dagger} \underbrace{\bigotimes_{x\in\{0,1\}^{<n}} U_x}_{(L_5^v)^\dagger} \underbrace{U_{copy}}_{(C_5)^\dagger}.
\end{align*}   

The conclusion for the decomposition of $\Gamma^\dagger$ then follows. For the cost analysis: According to Eq. \eqref{eq:CSx}, $(L^v_2)^\dagger,(L^v_3)^\dagger,(L^v_4)^\dagger,(L^v_5)^\dagger$ are depth-1 circuits consisting of $O(2^n)$ single-qubit gates, which are determined by target state $\ket{\psi_v}$. 
According to Lemma \ref{lem:copy1}, Figure \ref{fig:C^S_Rx}, Eq. \eqref{eq:Gamma_x3} and \eqref{eq:CSx}, $(C_1'')^\dagger$, $(C_4)^\dagger$ and $(C_5)^\dagger$ are quantum circuits of depth $O(n)$ and size $O(n2^n)$. 
Based on Eq. \eqref{eq:CSx}, $(C_2)^\dagger,(C_3)^\dagger$ are quantum circuits of depth $O(1)$ and size $O(2^n)$. This completes the proof.
% We state
% \begin{align*}
%     L_1^v& =L & (\text{depth } 1, \text{size } O(2^n), \text{Eq. }\eqref{eq:b1})\\
%     L^v_2 &= \bigotimes_{x\in\{0,1\}^{<n}} (D_{U_x^\dagger}^3)^\dagger & (\text{depth } 1, \text{size } O(2^n),\text{Eq. }\eqref{eq:CSx})\\
%     L^v_3 &= \bigotimes_{x\in\{0,1\}^{<n}} (D_{U_x^\dagger}^2)^\dagger & (\text{depth } 1, \text{size } O(2^n),\text{Eq. }\eqref{eq:CSx})\\
%     L^v_4 &= \bigotimes_{x\in\{0,1\}^{<n}} (D_{U_x^\dagger}^1)^\dagger & (\text{depth } 1, \text{size } O(2^n),\text{Eq. }\eqref{eq:CSx})\\
%     L^v_5 &= \bigotimes_{x\in\{0,1\}^{<n}} (U_x)^\dagger & (\text{depth } 1, \text{size } O(2^n),\text{Eq. }\eqref{eq:Gamma_x1})\\
%     C_1 & =\bigotimes_{x\in\{0,1\}^{<n}}(C^{\texttt{S}_x,x1}_{\texttt{R}_x}(X)W_x^2)^\dagger U_{copy} U_\ell & (\text{depth } O(n), \text{size } O(n2^n),\text{Lemma } \ref{lem:copy1}, \text{Eq. }\eqref{eq:b2} \eqref{eq:CSx})\\
%     C_2 & =\bigotimes_{x\in\{0,1\}^{<n}} CNOT^{\texttt{A}_x}_{\texttt{R}_x} & (\text{depth } 1, \text{size } O(2^n),\text{Eq. }\eqref{eq:CSx} )\\
%     C_3 & =\bigotimes_{x\in\{0,1\}^{<n}} CNOT^{\texttt{A}_x}_{\texttt{R}_x} & (\text{depth } 1, \text{size } O(2^n),\text{Eq. }\eqref{eq:Gamma_x3}\eqref{eq:CSx})\\
%      C_4 & =\bigotimes_{x\in\{0,1\}^{<n}}(W_x^1)^\dagger & (\text{depth } O(n), \text{size } O(n2^n),\text{Eq. }\eqref{eq:CSx})\\
%      C_5 & =U^\dagger_{copy} & (\text{depth } O(n), \text{size } O(n2^n),\text{Lemma } \ref{lem:copy1})
% \end{align*}
\end{proof}

%Given $O(n2^n)$ ancillary qubits, QSP circuit can be written as 
%\[\Gamma^\dagger U_\ell L=C_5L^v_5C_4L^v_4C_3L^v_3C_2L^v_2C_1 L^v_1,\]
%where $L^v_1,\ldots,L^v_5$ are 1-depth circuits consisting of $O(2^n)$ single-qubit gates, $C_1,C_4,C_5$ are $O(n)$-depth and $O(n2^n)$-size circuits and $C_2,C_3$ are 1-depth and $O(2^n)$-size circuits. $L^v_1,\ldots,L^v_5$ are determined by target quantum state $\ket{\psi_v}$ and $C_1,\ldots,C_5$ are independent of target quantum state $\ket{\psi_v}$.
Recall that {$L_1^v$ and} $C_1'$ {are the first and} the second step in Lemma \ref{lem:rosenthal-alg}, respectively. Now letting $C_1 = C_1''C_1'$, we get the following result.
\begin{lemma}\label{lem:rosen}
Any $n$-qubit quantum state $\ket{\psi_v}$ can be generated by a quantum circuit $QSP_v$, using single-qubit gates and CNOT gates, of depth $O(n)$ and size $O(n2^n)$, with $O(n2^n)$ ancillary qubits. The QSP circuit can be written as
\[QSP_v=C_5L_5^vC_4L_4^vC_3L_3^vC_2L_2^vC_1L_1^v.\]
Each $L_i^v=\bigotimes_{k=1}^{s_i} U^{i,v}_k$ is a depth-1 circuit consisting of $s_i=O(2^n)$ single-qubit gates, and $L_i^v$ is determined by $\ket{\psi_v}$. $C_1,C_4$ and $C_5$ are circuits of depth $O(n)$ and size $O(n2^n)$, and $C_2$ and $C_3$ are circuits of depth $O(1)$ and size $O(2^n)$. For any $i\in[5]$, $C_i$ is independent of $\ket{\psi_v}$. 
\end{lemma}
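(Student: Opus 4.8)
The plan is to assemble $QSP_v$ directly by composing the three-step structure of Rosenthal's algorithm (Lemma \ref{lem:rosenthal-alg}) with the structured factorization of $\Gamma^\dagger$ just obtained (Lemma \ref{lem:app_controlled}), and then merge two adjacent target-independent blocks. No new circuit construction is needed; the entire content is a careful identification of the pieces together with cost bookkeeping.

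First I would name the three steps. Step 1, the map $\ket{0}_{\texttt{R}_x}\to\ket{\phi_x}_{\texttt{R}_x}$ for all $x\in\{0,1\}^{<n}$, is a tensor product of $2^n-1$ single-qubit rotations whose parameters are read off from $\ket{\psi_v}$ via Eq.\eqref{eq:phix}; this is precisely a depth-$1$ layer of $O(2^n)$ single-qubit gates, so I identify it with $L_1^v$. Step 2 applies $U_\ell$ of Eq.\eqref{eq:U_ell}, which Section \ref{sec:rosenthal} records as a target-\emph{independent} standard circuit of depth $O(n)$ and size $O(n2^n)$; I identify it with $C_1'$. Step 3 applies $\Gamma^\dagger$. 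Since these act in order on the initial state with the last-applied unitary written leftmost, the whole circuit is $QSP_v=\Gamma^\dagger\, C_1'\, L_1^v$.

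Next I would substitute the factorization $\Gamma^\dagger=C_5 L_5^v C_4 L_4^v C_3 L_3^v C_2 L_2^v C_1''$ from Lemma \ref{lem:app_controlled}, giving $QSP_v=C_5 L_5^v C_4 L_4^v C_3 L_3^v C_2 L_2^v\, C_1''\, C_1'\, L_1^v$. The two rightmost blocks $C_1''$ and $C_1'$ are both target-independent and adjacent, so I set $C_1=C_1''C_1'$ to obtain exactly the claimed form $QSP_v=C_5 L_5^v C_4 L_4^v C_3 L_3^v C_2 L_2^v C_1 L_1^v$. The remaining task is the cost accounting: by Lemma \ref{lem:app_controlled}, $C_1''$ has depth $O(n)$ and size $O(n2^n)$, and by Section \ref{sec:rosenthal}, $C_1'$ has depth $O(n)$ and size $O(n2^n)$; hence $C_1$ has depth $O(n)+O(n)=O(n)$, size $O(n2^n)$, and inherits target-independence from both factors. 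The bounds on $C_2,C_3,C_4,C_5$ and on each $L_i^v$ transfer verbatim from Lemma \ref{lem:app_controlled}, while $L_1^v$ contributes depth $1$ and size $O(2^n)$. Summing the block depths $O(n),1,O(1),1,O(1),1,O(n),1,O(n),1$ yields total depth $O(n)$, summing sizes yields $O(n2^n)$, and the ancilla count is $O(n2^n)$ as used in the constructions underlying $C_1'$ and Lemma \ref{lem:app_controlled}.

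I do not expect a genuine obstacle here: the only point needing a moment of care is the merge $C_1=C_1''C_1'$, where I must confirm that concatenating two target-independent blocks preserves both target-independence and the $O(n)$ depth and $O(n2^n)$ size bounds. This is immediate from $O(n)+O(n)=O(n)$, so the proof is essentially a substitution-and-bookkeeping argument rather than a new technical step.
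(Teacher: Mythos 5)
Your proposal is correct and follows essentially the same route as the paper: the paper likewise identifies Step 1 of Lemma \ref{lem:rosenthal-alg} with $L_1^v$, Step 2 with the target-independent circuit $C_1'$, substitutes the factorization of $\Gamma^\dagger$ from Lemma \ref{lem:app_controlled}, and concludes by setting $C_1 = C_1'' C_1'$. The cost bookkeeping you give matches the paper's.
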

%The construction of this QSP circuit can be found in Appendix \ref{sec:app_rosenthal}. 
%It will be utilized to construct our optimal-depth QSP circuit. 

%Combining Lemma \ref{lem:johr} and Lemma \ref{lem:enc-trans}, we can prepare an $n$-qubit quantum state by a quantum circuit $\mathcal{C}_{n}$ of depth $O(n)$ and size $O(2^n)$ with $O(2^n)$ ancillary qubits, in which we use fewer ancillary qubits than Lemma \ref{lem:rosen}. However, $\mathcal{C}_{n}$ can not be utilized to optimize the QSP circuit depth with arbitrary $m$ ancillary qubits. Because there are $O(n)$ 1-depth circuits which are determined by target quantum state $\ket{\psi_v}$.
%Eq. \ref{eq:multi-QSP_framework}
%
%In our circuit framework in Figure \ref{fig:new_framework}, the first step is to prepare a $t$-qubit quantum state and the second step is to implement a multi-controlled quantum state preparation (Eq. \eqref{eq:multi-QSP_framework}). If we utilize $(n-t)$-qubit QSP circuits $\mathcal{C}_{n-t}$ to realize Eq. \eqref{eq:multi-QSP_framework},  then there exist $O(n-t)$ different $(t)$-UCGs because there are $O(n-t)$ 1-depth circuits which are determined by $\ket{\psi_v}$.
%All $O(n-t)$ $(t)$-UCGs are applied on the same set of qubits. According to Lemma \ref{lem:UCG}, there exists a circuit depth barrier $O(n-t)\times O(t)=O((n-t)t)$ to realize Eq. \eqref{eq:multi-QSP_framework}.
%However, there are only $5=O(1)$) depth-1 circuits determined by target quantum state $\ket{\psi_v}$ in QSP circuit in Lemma \ref{lem:rosen}. It can be used to break the barrier $O((n-t)t)$. \shengyu{(to be polished...)}

\subsection{Quantum circuit for (controlled) quantum state preparation}
Next, we will use Lemma \ref{lem:rosen} to efficiently realize the controlled quantum state preparation. Let us fix a constant $c$ in the size upper bound of $L_i^v$ in Lemma \ref{lem:rosen}, i.e. $s_i \le c\cdot 2^n$. The next lemma is a restatement of the upper bound part of Theorem \ref{thm:multi-QSP}.% though we change the parameters $(k,n)$ to $(t,n-t)$ for notational convenience.

%\shengyu{Consider whether to emphasize the multiple control qubits. Add some references and single out this lemma as a theorem and put in Intro.}

%\shengyu{In the theorem statement in Intro, consider to change the condition to $m> \max\{2c(n-t)2^{n-t}, t 2^{n-t}\}$ and the result looks neater: depth $O\left(n+\frac{2^n}{m}\right)$.}

%\yuan{Remove the limits of the number of ancillary qubits}
\begin{lemma}%[{Restatement of Theorem \ref{thm:multi-QSP}}] 
\label{lem:multi-QSP}
For any $k\ge 0$ and quantum states $\{\ket{\psi_i}:i\in \{0,1\}^k\}$, the following controlled quantum state preparation
\[\ket{i}\ket{0^{n}}\to\ket{i}\ket{\psi_i}, \ \forall i\in \{0,1\}^k,\]
can be implemented by a standard quantum circuit of depth $O\left(n+k+\frac{2^{n+k}}{n+k+m}\right)$ and size $O\left(2^{n+k}\right)$ with $m$ ancillary qubits.
\end{lemma}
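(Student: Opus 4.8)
The plan is to reduce the controlled state preparation problem to the uncontrolled one by treating the $k$ control qubits together with the $n$ target qubits, and then to exploit the crucial feature established in Lemma~\ref{lem:rosen}: the QSP circuit factors as $QSP_v = C_5 L_5^v C_4 L_4^v C_3 L_3^v C_2 L_2^v C_1 L_1^v$, where the $C_i$ are \emph{independent} of the target vector and the $L_i^v$ are depth-$1$ layers of single-qubit gates that carry all the dependence on $v$. For each control string $i\in\{0,1\}^k$ we must prepare a distinct state $\ket{\psi_i}$ on the $n$ target qubits, conditioned on the control register holding $\ket{i}$. If we simply ran $QSP_{\psi_i}$ controlled on $\ket{i}$ for each $i$ in sequence, we would pay a factor of $2^k$ in depth. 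The key idea is that the expensive, deep blocks $C_1,C_4,C_5$ (depth $O(n+k)$) do \emph{not} depend on $i$, so they can be applied once, shared across all $2^k$ branches; only the cheap depth-$1$ layers $L_j^{\psi_i}$ differ between branches.

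First I would set up the combined register and regard the desired map $\ket{i}\ket{0^n}\to\ket{i}\ket{\psi_i}$ as a family of QSP instances indexed by $i$. Applying Lemma~\ref{lem:rosen} with $n$ replaced by $n$ (the target size) but inserting the control register, each branch's circuit has the same block structure, with the same $C_1,\ldots,C_5$ and branch-dependent single-qubit layers $L_j^{\psi_i}$. I would then rewrite the overall controlled unitary as
\[
C_5\,\Big(\textstyle\sum_i \ket{i}\!\bra{i}\otimes L_5^{\psi_i}\Big)\,C_4\,\Big(\sum_i \ket{i}\!\bra{i}\otimes L_4^{\psi_i}\Big)\,C_3\,\Big(\sum_i \ket{i}\!\bra{i}\otimes L_3^{\psi_i}\Big)\,C_2\,\Big(\sum_i \ket{i}\!\bra{i}\otimes L_2^{\psi_i}\Big)\,C_1\,\Big(\sum_i \ket{i}\!\bra{i}\otimes L_1^{\psi_i}\Big),
\]
so that the five data-independent blocks are applied once each (now on $n+k$ effective qubits, giving depth $O(n+k)$ and size $O(n2^{n+k})$ after accounting for the enlarged Toffoli/fanout structure), while each data-dependent block becomes a $(k,O(2^n))$-UCU in which each target layer $L_j^{\psi_i}$ is a tensor product of $O(2^n)$ single-qubit gates controlled on the $k$-bit index $i$.

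The heart of the argument is therefore implementing each of these controlled single-qubit-layer blocks $\sum_i \ket{i}\!\bra{i}\otimes L_j^{\psi_i}$ efficiently. Since $L_j^{\psi_i} = \bigotimes_{t=1}^{s_j} U^{j,\psi_i}_t$ with $s_j = O(2^n)$, each such block is exactly of the form handled by Lemma~\ref{lem:multi-targ-UCU}: a uniformly controlled unitary with $q=k$ control qubits and $p = O(2^n)$ single-qubit targets. Invoking that lemma (with $m$ ancillae) gives each block a depth of $O\!\left(\log p + q + \frac{p\,2^q}{m}\right) = O\!\left(n+k+\frac{2^{n+k}}{m}\right)$ and size $O(p2^q)=O(2^{n+k})$. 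Summing the contributions of the constant number of data-independent blocks (each $O(n+k)$ depth) and the constant number of data-dependent UCU blocks gives total depth $O\!\left(n+k+\frac{2^{n+k}}{n+k+m}\right)$, where the $n+k$ in the denominator comes from the refined accounting in Lemma~\ref{lem:UCG}/Lemma~\ref{lem:multi-targ-UCU} that absorbs the $O(\log p + q)$ term. The total size is a constant number of $O(2^{n+k})$-size blocks plus the $O(n2^{n+k})$-size data-independent blocks, and I would argue (as the paper presumably does elsewhere) that the data-independent blocks can themselves be implemented in size $O(2^{n+k})$ rather than $O(n2^{n+k})$, so that the overall size is $O(2^{n+k})$.

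The main obstacle I anticipate is twofold. The more delicate point is the size bound: Rosenthal's construction as stated in Lemma~\ref{lem:rosen} gives the data-independent blocks $C_1,C_4,C_5$ size $O(n2^{n+k})$, which is a factor of $n$ too large to meet the claimed $O(2^{n+k})$ size. Closing this gap requires either a sharper implementation of the $U_\ell$ unitary and the $\Gamma$-type blocks (avoiding the per-leaf $O(n)$ overhead, perhaps by reusing the copying structure of Lemma~\ref{lem:copy1} more economically) or a separate, direct QSP construction achieving size $O(2^n)$ that still admits the depth-dependence separation. The second, more mechanical obstacle is bookkeeping the ancilla budget: the copy/uncopy steps, the register $\texttt{A}$ auxiliary qubits, and the $m$ ancillae distributed among the $p$ parallel UCGs must all fit simultaneously and be restored to $\ket{0}$, and I would need to verify that the $m\ge pq = O(k2^n)$ precondition of Lemma~\ref{lem:multi-targ-UCU} either holds in the relevant regime or can be circumvented (for small $m$ one presumably processes the single-qubit gates in $O(p2^q/m)$ sequential batches, which is exactly what produces the $\frac{2^{n+k}}{n+k+m}$ term).
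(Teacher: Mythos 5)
Your central idea is exactly the paper's main case: factor each branch's QSP circuit via Lemma~\ref{lem:rosen} as $C_5L_5^{i}C_4L_4^{i}C_3L_3^{i}C_2L_2^{i}C_1L_1^{i}$, commute the block-diagonal sum through so that each data-independent $C_r$ is applied once as $\mathbb{I}\otimes C_r$, and implement each $\sum_i\ket{i}\bra{i}\otimes L_r^{i}$ as a $(k,O(2^n))$-target UCU via Lemma~\ref{lem:multi-targ-UCU}. However, there is a genuine gap: the lemma is claimed for \emph{any} $m\ge 0$, including $m=0$, and your approach is intrinsically inapplicable when $m$ is small. Lemma~\ref{lem:rosen} itself consumes $\Theta(n2^n)$ ancillary qubits (for $U_{copy}$, the $\texttt{R}_x$ and $\texttt{S}_x$ registers, and the $\Gamma$ construction), and Lemma~\ref{lem:multi-targ-UCU} requires $m\ge pq=\Omega(k2^n)$ on top of that. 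Your suggested fix of ``processing the single-qubit gates in sequential batches'' only addresses the $L$-layers, not the fact that the surrounding framework cannot even be instantiated without $\Omega(n2^n)$ ancillae. The paper resolves this with a separate case for $m=O(2^{n+k}/(n+k)^2)$ that abandons Rosenthal's framework entirely and instead writes the CQSP as the chain of UCGs $V^{[n+k-1]}_{\{n+k\}}\cdots V^{[k]}_{\{k+1\}}$ (Lemma~\ref{lem:QSP_framework_UCG}), each implemented by Lemma~\ref{lem:UCG}; in that regime the sum $\sum_{j=1}^{n}O\bigl(k+j+\tfrac{2^{k+j}}{k+j+m}\bigr)$ is dominated by the last term and meets the bound. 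Without this second construction your proof does not cover the full parameter range.

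Two smaller points. First, your size accounting for the data-independent blocks is off in the wrong direction: the $C_r$ need not be enlarged to act on the control register at all (they are applied as $\mathbb{I}\otimes C_r$ on the target and ancilla registers only), so their size is $O(n2^n)$, not $O(n2^{n+k})$; the obstacle you flag as requiring ``a sharper implementation'' largely dissolves once this is noticed, at least when $2^k\gtrsim n$. Second, the paper needs an additional sub-case (a two-stage CQSP that first prepares a coarse-grained $\lceil 4\log(n+k)\rceil$-qubit version of the states) to make the ancilla-count preconditions of the Rosenthal-based route hold when $n$ is small relative to $\log(n+k)$; your proposal does not address this regime.
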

\begin{proof}
We consider two cases depending on $m$. 

\paragraph{Case 1: $m=O(2^{n+k}/(n+k)^2)$.} Let $QSP_i$ denote the QSP circuit for generating quantum state $\ket{\psi_i}$ on qubits $\{k+1,k+2,\cdots,k+n\}$ obtained from Lemma \ref{lem:QSP_framework_UCG}, then $QSP_i$ can be decomposed into $n$ UCGs:
\[
   QSP_i=V^{\{k+1,k+2,\cdots, n+k-1\}}_{\{n+k\}}(i)\cdots V^{\{k+1,k+2\}}_{\{k+3\}}(i)V^{\{k+1\}}_{\{k+2\}}(i) V^{\emptyset}_{\{k+1\}}(i). 
\]
Therefore, the controlled quantum state preparation can be implemented as
\begin{align*}
    & \sum_{i\in \{0,1\}^k}|i\rangle\langle i|\otimes QSP_i  \\
    =& \sum_{i\in \{0,1\}^k}|i\rangle\langle i|\otimes (V^{\{k+1,k+2,\cdots, n+k-1\}}_{\{n+k\}}(i)
    \cdots V^{\{k+1,k+2\}}_{\{k+3\}}(i) V^{\{k+1\}}_{\{k+2\}}(i) V^{\emptyset}_{\{k+1\}}(i))\\
    = & V^{[n+k-1]}_{\{n+k\}}\cdots V^{[k+2]}_{\{k+3\}} V^{[k+1]}_{\{k+2\}} V^{[k]}_{\{k+1\}}.
\end{align*}
For all $i\in[n]$, UCG $V^{[k+i-1]}_{\{k+i\}}$ can be implemented by a quantum circuit of depth $O\left(k+i+\frac{2^{k+i}}{k+i+m}\right)$ and size $O(2^{k+i})$ by Lemma \ref{lem:UCG}, using $m$ ancillary qubits. Therefore, the depth and size of this CQSP circuit are $\sum_{i=1}^{n}O\left(k+i+\frac{2^{k+i}}{k+i+m}\right)=O\left(\frac{2^{n+k}}{m+n+k}\right)$ and $\sum_{i=1}^{n}O\left(2^{k+i}\right)=O\left(2^{n+k}\right)$, respectively.

\paragraph{Case 2: $m=\Omega(2^{n+k}/(n+k)^2)$.} We will show the quantum circuit for CQSP in two sub-cases: $k\ge \lceil 4\log (n+k)\rceil$ and $k<\lceil 4\log (n+k)\rceil$. 

\paragraph{Case 2.1: $k\ge\lceil 4\log (n+k)\rceil$.} Then $m\ge \max\{2cn2^{n},k2^{n}\}$ for any constant $c>0$.
For all ${i\in \{0,1\}^k}$, suppose $\ket{\psi_i} =  \sum_{j=0}^{2^{n}-1}v^i_j\ket{j}$. Let $QSP_i$ denote a QSP circuit with $m_1=cn2^{n}$ ancillary qubits as guaranteed by Lemma \ref{lem:rosen} to prepare $\ket{\psi_i}$, which can be represented as 
\[QSP_i=C_5L_5^iC_4L_4^iC_3L_3^iC_2L_2^iC_1L_1^i.\]
Here each $L_r^i=\bigotimes_{j=0}^{s_r}U_j^{r,i}$ is a depth-1 circuit consisting of $s_r=O(2^{n})$ single-qubit gates, and $L_r^i$ is determined by $\ket{\psi_i}$. For $r\in [5]$, $C_r$ is an $(n+m_1)$-qubit circuit of depth $O(n)$, which is independent of $\ket{\psi_i}$. Note that the task in the statement of this lemma is nothing but the UCU of $\{QSP_i\}$, which can be implemented by applying $\sum_{i\in \{0,1\}^k}|i\rangle\langle i|\otimes QSP_i$. This operator can be decomposed as follows.
\begin{align}\label{eq:multi_qsp}
    & \sum_{i\in \{0,1\}^k}|i\rangle\langle i|\otimes QSP_i \nonumber \\
    =& \sum_{i\in \{0,1\}^k}|i\rangle\langle i|\otimes (C_5L_5^iC_4L_4^iC_3L_3^iC_2L_2^iC_1L_1^i)\\
    = & \prod_{r=5}^1\Big[(\mathbb{I}_t\otimes C_r)\Big(\sum_{i\in \{0,1\}^k}|i\rangle\langle i|\otimes L_r^i \Big)\Big].\nonumber
\end{align}
where the notation $\prod_{r=5}^1 A_r $ means to multiply the matrices $A_r$'s in the order of $A_5A_4A_3A_2A_1$. The second equation above holds because, when viewed as matrices, the equation is just a block diagonal matrix multiplication: 
\begin{align} \label{eq:switch}
    & \text{diag}(C_5L_5^0 C_4 L_4^0 C_3 L_3^0 C_2 L_2^0 C_1 L_1^0, \cdots, C_5L_5^{2^k-1} C_4 L_4^{2^k-1} C_3 L_3^{2^k-1} C_2 L_2^{2^k-1} C_1 L_1^{2^k-1}) \\
    = & \text{diag}(C_5, \cdots, C_5)\times \text{diag}(L_5^0, \cdots, L_5^{2^k-1}) \nonumber\times \text{diag}(C_4, \cdots, C_4) \times \cdots \times \text{diag}(L_1^0, \cdots, L_1^{2^k-1}).
\end{align}  %for the following reason: The second line is, for each $i$, for the $i$ component (in the control register), to apply $L_5^iC_4L_4^iC_3L_3^iC_2L_2^iC_1L_1^i$ on the target register; this is equivalent to the following: for each $i$, for the $i$ component, to apply $L_1^i$, then $C_1$, then for each $i$, for the $i$ component, to apply $L_2^i$, etc. 

In the $m$ ancillary qubits, we used $m_1$ for the UCU $\{QSP_i\}$, and have $m-m_1$ left. Since $m\ge k2^{n}$, we can apply Lemma \ref{lem:multi-targ-UCU} (where $q=k$ and $p\le c 2^{n}$) and obtain that, for every $r\in[5]$, $\sum_{i\in \{0,1\}^k}|i\rangle\langle i|\otimes L_r^i $ can be implemented by a quantum circuit of depth 
\[O\left(n+k+\frac{2^{n}\cdot 2^k}{m-cn2^{n}}\right) = O\left(n+k+\frac{2^{n+k}}{m-cn2^{n}}\right),\]
and size $O\left(2^{n}\times 2^{k}\right)=O(2^{n+k})$, with the $m-m_1$ ancillary qubits. For $r\in[5]$, every $C_r$ is a quantum circuit of depth $O(n)$ and size $O\left(n2^{n}\right)$. Putting everything together and noting $m\ge 2cn2^{n}$, we can implement  $\sum_{i\in \{0,1\}^k}|i\rangle\langle i|\otimes QSP_i$ by a quantum circuit of depth $O\left(n+k+\frac{2^{n+k}}{m-cn2^{n}}\right){=O\left(n+k+\frac{2^{n+k}}{m}\right)}$ and size $O\left(2^{n+k} + n2^{n}\right){=O(2^{n+k})}$,  with $m$ ancillary qubits. 

\paragraph{Case 2.2 $k<\lceil 4\log (n+k)\rceil$.} Define $n$-qubit quantum state $\ket{\psi_i}=\sum_{\tau=0}^{2^{n}-1}v_{\tau,i}\ket{\tau}$ and $\ket{\psi_{i}^{(s)}} = \sum_{\eta=0}^{2^{s}-1} v'_{\eta,i} \ket{\eta}$, where $s\le n$ and $v'_{\eta,i}=\sqrt{\sum_{p=0}^{2^{n-s}-1}|v_{\eta\cdot 2^{n-s}+p,i}|^2}$. Our construction consists of two steps. In the first step, we implement a $\lceil 4\log (n+k)\rceil$-qubit CQSP, using $m=\Omega(2^{n+k}/(n+k)^2)$ ancillary qubits:
\begin{equation}
    \text{CQSP1}:~\ket{i}\ket{0^{n}}\to\ket{i}\ket{\psi_i^{(s)}}\ket{0^{n+k-\lceil 4\log (n+k)\rceil}}, \forall i\in \{0,1\}^k, \label{eq:CQSP1}
    \end{equation}
where $s = \lceil 4\log (n+k)\rceil-k$. Note that $k$ and $s$ satisfy $m>\max\{2cs2^{s},k2^{s}\}$, thus similar to Case 2.1 above, we can implement Eq. \eqref{eq:CQSP1} by a circuit of depth $O(\log (n+k))$ and size $O((n+k)^4)$. In the second step, we  implement an $(n+k)$-qubit CQSP using $m$ ancillary qubits:
\begin{equation}\label{eq:CQSP2}
   \text{CQSP2}:~ \ket{i}\ket{\eta}\ket{0^{(n+k)-\lceil 4\log (n+k)\rceil}}  \to \\ \ket{i}\ket{\eta}\ket{\phi_{i,\eta}}, \forall i\in \{0,1\}^k,\eta\in\{0\}\cup [2^s-1],
\end{equation}
where $\ket{\phi_{i,\eta}}\defeq\sum_{p=0}^{2^{(n+k)-\lceil 4\log (n+k)\rceil}}v_{\eta\cdot 2^{(n+k)-\lceil 4\log (n+k)\rceil}+p,i}/v'_{\eta,i}\ket{p}$. Eq. \eqref{eq:CQSP2} is a CQSP, in which the number of controlled qubits $\lceil 4\log (n+k)\rceil$ satisfying $m>\max\{2c((n+k)-\lceil 4\log (n+k)\rceil)2^{(n+k)-\lceil4\log (n+k)\rceil}, \lceil 4\log (n+k)\rceil2^{(n+k)-\lceil4\log (n+k)\rceil}\}$. Therefore Eq. \eqref{eq:CQSP2} can be implemented in the same way as in Case 2.1, such that the depth and size are $O\left(n+k+\frac{2^{n+k}}{n+k+m}\right)$ and $O(2^{n+k})$, respectively. It can be verified that the CQSP operator can be implemented by $\text{CQSP2}\cdot \text{ CQSP1}$ and the depth and size are $O\left(n+k+\frac{2^{n+k}}{n+k+m}\right)$ and $O(2^{n+k})$, respectively.
\end{proof}

The paper \cite{sun2021asymptotically} presents an $n$-qubit QSP circuit with $m$ ancillary qubits. For $m\in\left[0,O\left(\frac{2^n}{n\log n}\right)\right]\cup \left[\Omega(2^n),+\infty\right)$, the circuit depth for $n$-qubit quantum state preparation is optimal. However, if $m \in \left[\omega\left(\frac{2^n}{n\log n}\right), o(2^n)\right]$, there still exists a logarithmic gap between the upper and lower bounds of QSP circuit depth. Since our Theorem \ref{thm:multi-QSP} gives a unified construction that works for any $k$, including $k=0$, we obtain Theorem \ref{thm:QSP}, which closes the gap left open in \cite{sun2021asymptotically}. 
\paragraph{Remarks}
\begin{enumerate}
    \item {In \cite{plesch2011quantum}, it was shown that any $n$-qubit quantum states are determined by $2^n-1$ free parameters omitting a global phase. In Theorem \ref{thm:multi-QSP}, an $(n+k)$-qubit CQSP is defined by $2^{k}$ $n$-qubit quantum states. Therefore, it is determined by $2^{k}\cdot 2^{n}-1=2^{n+k}-1$ free parameters.
    Thus by a similar argument for the depth lower bound of the quantum state preparation in \cite{plesch2011quantum}, we can get a depth lower bound for $(k,n)$-qubit CQSP is $\Omega\left(\frac{2^{n+k}}{n+k+m}\right)$ using $m$ ancillary qubits.
    Moreover, the same as the proof of Lemma 37 in \cite{sun2021asymptotically}, we can also obtain a depth lower bound $\Omega(n+k)$ by the light cone argument. Combining the two results above, the depth lower bound for CQSP is $\Omega\left(n+k+\frac{2^{n+k}}{n+k+m}\right)$. Therefore, the circuit depth in Theorem \ref{thm:multi-QSP} is optimal.}
    \item If the number of controlled qubits $k$ in Theorem \ref{thm:multi-QSP} is $0$, the CQSP degenerates to standard QSP. Therefore, we can obtain an optimal QSP circuit as in Theorem \ref{thm:QSP}. %\yuan{Lemma \ref{lem:rosen} is also a degenerate case of CQSP.}
\end{enumerate}

\section{Circuit depth optimization of general unitary synthesis}
\label{sec:unitary}
% \begin{theorem}[Quantum state preparation,\cite{sun2021asymptotically}] \label{thm:QSP}
% For any $m\ge 0$, any $n$-qubit quantum state $\ket{\psi_v}$ can be generated by a quantum circuit with $m$ ancillary qubits, using single-qubit gates and CNOT gates, of {size $O(2^n)$ and} depth 
% \[\left\{\begin{array}{ll}
%   O\left(\frac{2^n}{m+n}\right),  & \text{if~} m\in [2n, O(\frac{2^n}{n\log n})],  \\
%   O\left(n\log n\right),  & \text{if~} m\in [\omega(\frac{2^n}{n\log n}),o(2^n)],\\
%   O\left(n\right),  & \text{if~} m = \Omega(2^n).
% \end{array}\right.\]
% \end{theorem}

The following oracle is used in the circuit constructions in \cite{rosenthal2021query}.
\begin{definition}[Oracle $O_U$ of $U$]\label{def:oracle}
Let $U=[u_{y,x}]_{y,x\in\{0,1\}^n}\in\mathbb{C}^{2^n\times 2^n}$ denote a general $n$-qubit unitary operator. Let vector $u_x\in\mathbb{C}^{2^n}$ denote the $x$-th column of $U$ and  $\ket{u_x}=\sum_{y\in\{0,1\}^n} u_{y,x}\ket{y}$ is the corresponding $n$-qubit quantum state. The following unitary transformation $O_U$ is defined as the $U$-oracle:
\[\ket{x}\ket{0^n}\xrightarrow{O_U} \ket{x}\ket{u_x}, \text{for~all~}x\in\{0,1\}^n.\]
\end{definition}
%\shengyu{(I changed $U_{y,x}$ to $u_{y,x}$ to avoid confusion with the $U_{x,y}$ in Lemma \ref{lem:general_UCU})}
%If we have enough number of ancillary qubits, then we can just use Lemma \ref{lem:multi-QSP}. The next lemma solves the conditioned problem using any number of ancillary qubits. %\shengyu{(Consider to move them together or even merge them?)}
%\yuan{(I extend the regime of $m$ in Lemma \ref{lem:multi-QSP}, so the proof of Lemma \ref{lem:oracle} is omitted. Lemma \ref{lem:oracle} is only used in the discussion after Lemma \ref{lem:unitary_grover}.)}
This oracle {will be used} as an intermediate step of the circuit construction. Note that $O_U$ prepares the state $\ket{u_x}$ in the second register conditioned on the first register being $\ket{x}$. Given $O_U$, it is not immediate how to implement $U$, which changes $\ket{x}$ to $\ket{u_x}$ in place. However, we can indeed implement $U$ if we are allowed to use many queries to $O_U$ and $O_U^\dagger$.
{First, we can directly apply Theorem \ref{thm:multi-QSP} to obtain the following circuit construction for the oracle.} 
\begin{lemma}\label{lem:oracle}
For any $m\ge 0$ and $U\in\mathbb{C}^{2^n\times 2^n}$, the $U$-oracle $O_U$ and its inverse $O^\dagger_U$ can each be implemented by a standard quantum circuit of depth $O\left(n+\frac{4^n}{n+m}\right)$ and size $O(4^n)$ with $m$ ancillary qubits.
\end{lemma}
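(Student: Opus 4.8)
The plan is to recognize that the $U$-oracle is nothing but a controlled quantum state preparation instance with an index register of size $k=n$. Writing $\ket{u_x}=\sum_{y\in\{0,1\}^n}u_{y,x}\ket{y}$ for the $n$-qubit state encoding the $x$-th column of $U$, the defining transformation $\ket{x}\ket{0^n}\to\ket{x}\ket{u_x}$ for all $x\in\{0,1\}^n$ (Definition \ref{def:oracle}) is precisely the $(n,n)$-CQSP for the family $\{\ket{u_x}: x\in\{0,1\}^n\}$. First I would note that these are legitimate normalized states: since $U$ is unitary its columns are unit vectors, so each $\ket{u_x}$ has unit norm and the CQSP is well-defined and covered by Theorem \ref{thm:multi-QSP}.

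With this identification in hand, I would invoke Theorem \ref{thm:multi-QSP} (equivalently Lemma \ref{lem:multi-QSP}) with both of its parameters taken equal to $n$. Substituting $k=n$ gives $n+k=2n$ and $2^{n+k}=4^n$, so the guaranteed depth becomes
\[
O\left(n+k+\frac{2^{n+k}}{n+k+m}\right)=O\left(2n+\frac{4^n}{2n+m}\right)=O\left(n+\frac{4^n}{n+m}\right),
\]
and the guaranteed size becomes $O(2^{n+k})=O(4^n)$, using $m$ ancillary qubits, exactly as claimed for $O_U$.

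For the inverse $O_U^\dagger$, I would simply run the constructed circuit backwards: reverse the ordering of its layers and replace each gate by its adjoint. A CNOT gate is self-inverse, and the adjoint of a single-qubit gate is again a single-qubit gate, so the reversed circuit is still a standard quantum circuit with identical depth and size; on inputs whose ancillas start in $\ket{0^m}$ it implements $O_U^\dagger$ and restores the ancillas to $\ket{0^m}$. There is essentially no obstacle here beyond the bookkeeping of the parameter substitution; the only point requiring a moment's care is confirming that the target family $\{\ket{u_x}\}$ consists of valid quantum states, which is guaranteed by the unitarity of $U$, so that Theorem \ref{thm:multi-QSP} applies verbatim.
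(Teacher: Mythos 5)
Your proposal is correct and matches the paper's own (very brief) argument: the paper likewise obtains Lemma \ref{lem:oracle} by directly applying Theorem \ref{thm:multi-QSP} with $k=n$, which yields depth $O\left(n+\frac{4^n}{n+m}\right)$ and size $O(4^n)$, with $O_U^\dagger$ handled by reversing the circuit. Your added checks (unit-norm columns, self-inverse CNOTs) are correct bookkeeping that the paper leaves implicit.
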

\paragraph{Remarks}
\begin{enumerate}
    \item {Ref.\cite{rosenthal2021query} gives a construction for $O_U$ and $O_U^\dagger$ with $O(n)$ depth using $m=\Theta(n4^n)$ ancillary qubits. In comparison, our Theorem \ref{thm:multi-QSP} only needs $m=\Theta(4^n/n)$ ancillary qubits to achieve $O(n)$ depth, and works for any $m \ge 0$.}
    \item {Since Theorem \ref{thm:multi-QSP} is tight for all values of parameters $(n,k,m)$, the bounds in Lemma \ref{lem:oracle} are also optimal.}
\end{enumerate}

\begin{lemma}[\cite{rosenthal2021query}]\label{lem:unitary_grover}
For any  $m\in [n,O(4^n/n)]$, any $n$-qubit unitary $U$ can be implemented by $\ell=O\left(2^{n/2}\right)$ many queries to oracle $O_U$ or $O_U^\dagger$:
\[\left(U\ket{\phi}\right)\ket{0}^{\otimes m} = C_\ell O_U^{(\dagger)}C_{\ell-1}O_U^{(\dagger)}C_{\ell-2}O_U^{(\dagger)}\cdots C_{2}O_U^{(\dagger)}C_{1}\left(\ket{\phi}\ket{0}^{\otimes m}\right),\text{ for all $n$-qubit states }\ket{\phi},\]
where $O_U^{(\dagger)}$ denotes either the oracle $O_U$ or $O_U^\dagger$, and each $C_i$ is a standard quantum circuit of depth $O(\log m)$ and size $O(m)$, and is independent of $U$.
\end{lemma}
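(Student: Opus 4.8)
The plan is to implement $U$ by amplitude amplification, using $O_U$ only to carry the $U$-dependent part of the state and keeping every reflection independent of $U$. Write the input as $\ket{\phi}$ on an $n$-qubit register $\texttt{A}$, and reserve an $n$-qubit block $\texttt{B}$ of the ancilla initialized to $\ket{0^n}$. Applying $O_U$ to $\texttt{A},\texttt{B}$ (Definition \ref{def:oracle}) gives $\ket{\Psi_0}=\sum_x\phi_x\ket{x}_{\texttt{A}}\ket{u_x}_{\texttt{B}}$. The key observation is that if register $\texttt{A}$ is projected onto the uniform superposition $\ket{+}^{\otimes n}=H^{\otimes n}\ket{0^n}$, the residual state on $\texttt{B}$ is exactly $\tfrac{1}{\sqrt{2^n}}U\ket{\phi}$, since $\langle +^{\otimes n}|_{\texttt{A}}\ket{\Psi_0}=\tfrac{1}{\sqrt{2^n}}\sum_x\phi_x\ket{u_x}=\tfrac{1}{\sqrt{2^n}}U\ket{\phi}$. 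Thus the ``good'' component (register $\texttt{A}$ in $\ket{+}^{\otimes n}$) has amplitude exactly $a=2^{-n/2}$, and crucially this value is the \emph{same for every} $\ket{\phi}$. A single fixed gate sequence will therefore amplify all inputs identically, which is what lets the construction implement $U$ as a genuine unitary.

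I would use two reflections. First, $R_{\mathrm{good}}=\mathbb{I}-2\Pi_{\mathrm{good}}$ with $\Pi_{\mathrm{good}}=\ket{+}^{\otimes n}\!\bra{+}^{\otimes n}_{\texttt{A}}\otimes\mathbb{I}_{\texttt{B}}$, realized as $H^{\otimes n}$ on $\texttt{A}$, a phase flip on $\ket{0^n}_{\texttt{A}}$, and $H^{\otimes n}$ again, all independent of $U$. Second, in place of the usual reflection about the (unknown) initial state, I would use $\tilde R=O_U\,S_{\texttt{B}}\,O_U^\dagger$, where $S_{\texttt{B}}=\mathbb{I}_{\texttt{A}}\otimes(\mathbb{I}_{\texttt{B}}-2\ket{0^n}\!\bra{0^n}_{\texttt{B}})$ reflects only $\texttt{B}$ about $\ket{0^n}$ and is again independent of $U$. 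Conjugating by the oracle turns $S_{\texttt{B}}$ into the reflection $\mathbb{I}-2\Pi_{\mathcal{W}}$ about the subspace $\mathcal{W}=\mathrm{span}\{\ket{x}_{\texttt{A}}\ket{u_x}_{\texttt{B}}\}$, which contains $\ket{\Psi_0}$.

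The heart of the argument is to show that, although $\tilde R$ reflects about the whole subspace $\mathcal{W}$ rather than about the single vector $\ket{\Psi_0}$, the dynamics is still the textbook two-dimensional Grover rotation. I would verify that $P=\mathrm{span}\{\ket{\Psi_0},\ket{g}\}$, with $\ket{g}=\sqrt{2^n}\,\Pi_{\mathrm{good}}\ket{\Psi_0}=\ket{+}^{\otimes n}_{\texttt{A}}(U\ket{\phi})_{\texttt{B}}$, is invariant under both reflections: $R_{\mathrm{good}}$ clearly preserves $P$, and a short computation using orthonormality of the columns $\{\ket{u_x}\}$ gives $\Pi_{\mathcal{W}}\ket{g}=\tfrac{1}{\sqrt{2^n}}\ket{\Psi_0}$, so $\tilde R$ preserves $P$ as well. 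Restricted to $P$, $\Pi_{\mathcal{W}}$ is exactly the rank-one projector onto $\ket{\Psi_0}$, hence $\tilde R$ acts inside $P$ precisely as the standard reflection $\mathbb{I}-2\ket{\Psi_0}\!\bra{\Psi_0}$. Consequently $\tilde R\,R_{\mathrm{good}}$ rotates by the fixed angle $2\arcsin(2^{-n/2})$ in $P$, so $\ell=O(2^{n/2})$ iterations carry $\ket{\Psi_0}$ to $\ket{g}$.

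Finally I would collect costs and finish the bookkeeping. Each iterate is $R_{\mathrm{good}}\,O_U\,S_{\texttt{B}}\,O_U^\dagger$, using two oracle queries and two $U$-independent reflections; over $\ell=O(2^{n/2})$ iterates this is $O(2^{n/2})$ queries, as claimed. Both reflections are a layer of Hadamards plus a phase flip on an $n$-qubit all-zero string, i.e.\ an $n$-fold controlled phase, which by Lemma \ref{lem:tof} has depth $O(\log n)=O(\log m)$ and size $O(n)$ using $O(n)$ ancilla, well within the $m\ge n$ available; each $C_i$ thus meets the stated bound and is independent of $U$. To obtain an \emph{exact} implementation rather than an $O(1)$-close one, I would invoke exact amplitude amplification, replacing the $\pm 1$ marking phases on one iteration by a single tunable phase computed from the known value $a=2^{-n/2}$; since $a$ does not depend on $\ket{\phi}$, one fixed choice works for all inputs. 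A closing layer absorbed into $C_\ell$ applies $H^{\otimes n}$ to $\texttt{A}$ and swaps $\texttt{A},\texttt{B}$, yielding $(U\ket{\phi})\ket{0}^{\otimes m}$. The main obstacle is precisely the middle step: justifying that reflecting about the full subspace $\mathcal{W}$, the only $U$-independent reflection the oracle lets us build, reproduces on the relevant two-dimensional invariant subspace the reflection about the unknown initial state that ordinary amplitude amplification would demand.
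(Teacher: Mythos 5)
The paper imports this lemma from \cite{rosenthal2021query} without proof, and your reconstruction is essentially Rosenthal's own argument: prepare $\sum_x\phi_x\ket{x}\ket{u_x}$, observe that the component with the first register in $\ket{+}^{\otimes n}$ has amplitude exactly $2^{-n/2}$ independently of $\ket{\phi}$, and run exact amplitude amplification with the reflection about the (unknown) initial state replaced by the oblivious reflection $O_U S_{\texttt{B}} O_U^{\dagger}$ about $\mathrm{span}\{\ket{x}\ket{u_x}\}$. Your verification that this subspace reflection restricts to the rank-one reflection about $\ket{\Psi_0}$ on the two-dimensional invariant subspace is exactly the key step, and the cost accounting matches the stated bounds.
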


By Lemma \ref{lem:oracle} and \ref{lem:unitary_grover}, we can obtain the following corollary.
\begin{corollary}\label{coro:general_US_sub}
For any $m\in [n,O(4^n/n)]$, any $n$-qubit unitary can be implemented by a standard quantum circuit of depth $O\left(n2^{n/2}+\frac{2^{5n/2}}{m}\right)$, using $m$ ancillary qubits.
\end{corollary}
\begin{proof}
    By Lemma \ref{lem:unitary_grover}, we obtain a quantum circuit of $\ell = O(2^{n/2})$ queries to $O_U$ and $O_U^\dagger$, with the total depth between queries is the summation of that of $C_i$'s, which is $\ell\cdot O(\log m)$. By Lemma \ref{lem:oracle}, each query to $O_U$ and $O_U^\dagger$ can be implemented by a circuit of depth $O(n+\frac{4^n}{(m-n)+n})$. Putting the two together, we obtain a circuit of depth
    \[O(2^{n/2})\cdot \big(\log m+n+\frac{4^n}{(m-n)+n}\big)= O\big(n2^{n/2}+\frac{2^{5n/2}}{m}\big).\]
\end{proof}
If the number of ancillary qubits is large, this bound improves the previous depth bound of $O(n2^{n})$ in \cite{sun2021asymptotically}. Next, we will show how to further improve the circuit depth for the parameter regime $\Omega(2^n) \le m \le O(4^n)$ by the cosine-sine decomposition and the following UCU. Note that each cosine-sine decomposition (Lemma \ref{lem:CSD}) reduces a general unitary to two $(1,n-1)$-UCUs and one $(n-1)$-UCG. One can continue this decomposition to further decrease the number of the target qubits to $n-2$, $n-3$, and so on. But it turns out that going all the way down to 1 target qubit does not give the most efficient construction. To see where to stop using the cosine-sine decomposition, we need to understand the circuit complexity of an  $(n-k,k)$-UCU for a general $k$, which is the subject of the next lemma.

\begin{lemma}\label{lem:general_UCU}
Let $T=\{n-k+1,n-k+2,\ldots, n\}$ and $S=[n-k]$ for any $k\in\{2,3,\ldots,n\}$. For any $m\ge 0$, any $(n-k,k)$-UCU $V^{S}_{T}$ can be implemented by a quantum circuit of depth $O\left(n2^{k/2}+\frac{2^{n+\frac{3}{2}k}}{n+m}\right)$ and size $O\left(m2^{k/2}+2^{n+3k/2}\right)$, with $m$ ancillary qubits.
\end{lemma}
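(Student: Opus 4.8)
The plan is to reduce the $(n-k,k)$-UCU to a \emph{uniformly controlled} version of Rosenthal's Grover-based unitary synthesis (Lemma~\ref{lem:unitary_grover}), and then to recognize the only $V$-dependent part as a CQSP instance handled by Theorem~\ref{thm:multi-QSP}. Write $V^S_T = \sum_{x\in\{0,1\}^{n-k}} \ket{x}\bra{x}_S \otimes U_x$, where each $U_x$ is a $k$-qubit unitary on the target register $T$. First I would apply Lemma~\ref{lem:unitary_grover} to each $U_x$ with an inner ancilla count $m'$, writing $U_x = C_\ell O_{U_x}^{(\dagger)} C_{\ell-1} O_{U_x}^{(\dagger)} \cdots C_1$ with $\ell = O(2^{k/2})$. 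The crucial feature to exploit is that the intermediate circuits $C_1,\ldots,C_\ell$ depend only on $k$, not on $U_x$. Pulling these $x$-independent factors out of the control gives
\[V^S_T = (\mathbb{I}_S\otimes C_\ell)\, \widetilde O^{(\dagger)}\, (\mathbb{I}_S\otimes C_{\ell-1})\cdots \widetilde O^{(\dagger)}\, (\mathbb{I}_S\otimes C_1),\]
where $\widetilde O \defeq \sum_{x} \ket{x}\bra{x}_S \otimes O_{U_x}$ is a controlled oracle. Each layer $\mathbb{I}_S\otimes C_i$ needs no genuine control, so it costs only the depth $O(\log m')$ and size $O(m')$ of $C_i$.

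Next I would observe that $\widetilde O$ maps $\ket{x}\ket{z}\ket{0^k} \mapsto \ket{x}\ket{z}\ket{(U_x)_z}$ for all $x\in\{0,1\}^{n-k}$, $z\in\{0,1\}^k$; that is, controlled on the $n$-bit string $(x,z)$, it prepares the $k$-qubit column state $\ket{(U_x)_z}$. This is exactly an $(n,k)$-CQSP (with $n$ control qubits and $k$-qubit target states), so Theorem~\ref{thm:multi-QSP} implements it in depth $O(n+k+\frac{2^{n+k}}{n+k+m})$ and size $O(2^{n+k})$ using the $m$ ancillary qubits; since $k\le n$ and $m\ge n$, the depth simplifies to $O(n + 2^{n+k}/m)$. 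The inverse layer $\widetilde O^{\dagger} = \sum_x \ket{x}\bra{x}_S\otimes O_{U_x}^{\dagger}$ is the reversed circuit, hence has identical cost.

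Summing over the $\ell = O(2^{k/2})$ layers, the depth is $O(2^{k/2})\cdot\big(O(n+2^{n+k}/m) + O(\log m')\big) = O(n2^{k/2} + 2^{n+3k/2}/m)$, where the $C_i$ contribution is absorbed using $\log m' = O(k)\le O(n)$, and the size is $O(2^{k/2})\cdot\big(O(2^{n+k}) + O(m')\big) = O(2^{n+3k/2} + m'2^{k/2})$, which lies within the claimed $O(m2^{k/2} + 2^{n+3k/2})$ once I choose $m' = O(\min\{m, 4^k/k\})$ so that Lemma~\ref{lem:unitary_grover} applies. This matches both claimed bounds.

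The main obstacle I anticipate is the bookkeeping of the ancillary qubits. I must partition the $m$ ancillas into a \emph{persistent} Grover workspace of size $m'$ (carrying the intermediate state across all oracle calls) and \emph{scratch} space that the CQSP circuit restores to $\ket{0}$ after each $\widetilde O$ layer so it can be reused by the next one; I then need the CQSP denominator $n+k+m$ to genuinely reflect the scratch ancillas still available to $\widetilde O$ after reserving the workspace. Verifying that the two parameter regimes $m\in[n,O(4^n/n)]$ (outer) and $m'\in[k,O(4^k/k)]$ (inner) are simultaneously satisfiable for all admissible $k$, and that reserving $m'\le m$ workspace qubits does not degrade the $2^{n+k}/m$ term, is the delicate part of the argument.
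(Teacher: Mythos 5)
Your proposal is correct and follows essentially the same route as the paper: decompose each $U_x$ via the Grover-based synthesis of Lemma~\ref{lem:unitary_grover}, pull the $U_x$-independent circuits $C_i$ outside the control, and implement the controlled oracle $\sum_{x}|x\rangle\langle x|\otimes O_{U_x}$ as an $(n,k)$-CQSP using Theorem~\ref{thm:multi-QSP}. The only cosmetic difference is that the paper splits the controlled-oracle step into a small-$m$ case (handled by a direct UCG decomposition via Lemmas~\ref{lem:QSP_framework_UCG} and~\ref{lem:UCG}) and a large-$m$ case (handled by CQSP), whereas you invoke the CQSP bound uniformly, which is fine since Theorem~\ref{thm:multi-QSP} covers all $m\ge 0$.
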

\begin{proof}
We will implement $V^S_T$ by an $(n+m)$-qubit quantum circuit. When $m\le n$, we decompose $V^S_T$ into $O(2^k)$ $n$-qubit UCGs by repeatedly using cosine-sine decomposition in Lemma \ref{lem:CSD}. Combining with Lemma \ref{lem:UCG}, the total depth of $V^S_T$ is $O(2^k)\cdot O(n+2^n/(n+m))=O(2^{n+k}/(n+m))=$.
When $m\ge n$, the idea is to implement each $(n-k)$-qubit controlled $k$-qubit unitary $U_x$ by Lemma \ref{lem:unitary_grover}. Observe that these $(n-k)$ control qubits can be combined with the $k$ control qubits in the definition of oracle $O_{U_x}$ (Definition \ref{def:oracle}), to form an $(n,k)$-CQSP. Then we invoke Lemma \ref{lem:multi-QSP} to implement them and obtain the bounds.

%We label the first $n$ qubits as $1,\ldots,n$, and the ancillary qubits as $n+1,n+2,\ldots,n+m$. 
According to Lemma \ref{lem:unitary_grover}, for all $x\in\{0,1\}^{n-k}$, any $k$-qubit unitary $U_x\in\mathbb{C}^{2^k\times 2^k}$ acting on qubits $\{n-k+1,n-k+2,\ldots,n\}$ can be implemented by $O(2^{k/2})$ queries to the $2k$-qubit oracles $O_{U_x}$ and $O^\dagger_{U_x}$. Here each $O_{U_x}$ is on $2k$ qubits, $k$ of which is for $U_x$ and the other $k$ using ancillary qubits. Using the notation $O_{U_x}^{(\dagger)}$ to denote oracle $O_{U_x}$ or $O^\dagger_{U_x}$, we have that
\begin{align*}
    (U_x\ket{\phi})\ket{0}^{\otimes m} & = C_{\ell}O_{U_x}^{(\dagger)} C_{\ell-1} O_{U_x}^{(\dagger)} \cdots  C_2 O_{U_x}^{(\dagger)} C_1 \left(\ket{\phi}\ket{0}^{\otimes m}\right), \text{for all $n$-qubit states} \ket{\phi}
      %&=C_{\ell} UCU^{\{n-k+1,\ldots,n+k-1\}}_{\{n+k\}} UCU^{\{n-k+1,\ldots,n+k-1\}}_{\{n+k\}}\cdots UCU^{\{n-k+1,\ldots,n\}}_{\{n+1\}} C_{\ell-1} \cdots C_1,
\end{align*}
where $\ell=O(2^{k/2})$ and $C_1,\ldots,C_\ell$ are depth-$O(\log m)$ and size-$O(m)$ quantum circuits independent of $U_x$. 
Any $n$-qubit UCU $V^{S}_{T}$ can thus be implemented as follows
\begin{align}\label{eq:VST}
     V^{S}_{T}
    = &\sum_{x\in\{0,1\}^{n-k}}|x\rangle\langle x| \otimes U_x, \nonumber\\
    =&\sum_{x\in\{0,1\}^{n-k}}|x\rangle\langle x| \otimes \left[ C_{\ell}O_{U_x}^{(\dagger)} C_{\ell-1} O_{U_x}^{(\dagger)} \cdots  C_2 O_{U_x}^{(\dagger)} C_1\right], \nonumber\\
    = &[\mathbb{I}_{n-k}\otimes C_\ell]\big(\sum_{x\in\{0,1\}^{n-k}}|x\rangle\langle x| \otimes O_{U_x}^{(\dagger)} \big)\cdots [\mathbb{I}_{n-k}\otimes C_{2}] \big(\sum_{x\in\{0,1\}^{n-k}}|x\rangle\langle x| \otimes O_{U_x}^{(\dagger)} \big)[\mathbb{I}_{n-k}\otimes C_{1}],
\end{align}
where we switched the summation and multiplication again because of the block diagonal matrix as for Eq. \eqref{eq:switch}. Now we implement $\sum_{x\in\{0,1\}^{n-k}}|x\rangle\langle x| \otimes O_{U_x}^{(\dagger)}$. 
It can be regarded as a controlled quantum state preparation, which has $n$ controlled qubits and $k$ target qubits. Hence, by Lemma \ref{lem:multi-QSP}, we can implement it by a circuit of depth $O\left(n+k+\frac{2^{n+k}}{n+k+(m-k)}\right)=O\left(n+\frac{2^{n+k}}{m}\right)$ and size $O\left(2^{n+k}\right)$. 
%\sout{Combining the above two cases, we can implement $\sum_{x\in\{0,1\}^{n-k}}|x\rangle\langle x| \otimes O_{U_x}$ by a circuit of depth $O\left(n+\frac{2^{n+k}}{m}\right)$ and size $O(2^{n+k})$.}
Therefore by Eq. \eqref{eq:VST}, for any $m\in [n,O(4^n/n)]$, unitary $V^S_T$ can be realized by a circuit of depth \[O(2^{k/2})\cdot O\big(n+\frac{2^{n+k}}{m}\big)+O(2^{k/2})\cdot O(\log m)= O\big(n2^{k/2}+\frac{2^{n+\frac{3}{2}k}}{n+m}\big),\] and size \[O(2^{k/2})\cdot O\big(2^{n+k}\big)+O(2^{k/2})\cdot O(m) = O\big(m2^{k/2}+2^{n+3k/2}\big).\] 
\end{proof}

%\yuan{Note: Depth lower bound for the following lemma: $\Omega\left(n+\frac{2^{n+k}}{n+m}\right)$. When $k=O(1)$, for any $m\ge 0$, the depth is tight. }

\paragraph{Remarks.}
\begin{enumerate}
    \item \textbf{Extension of UCG}. In \cite{sun2021asymptotically}, it was shown that any $n$-UCG can be implemented by a standard circuit of depth $O(n+2^n/(m+n))$. Lemma \ref{lem:general_UCU} generalizes this result to any $k$.
    \item \textbf{Tightness}. In \cite{shende2004minimal}, it was shown that any $n$-qubit unitary is determined by $4^n-1$ free parameters omitting a global phase. Because UCU $V^S_T$ in Lemma \ref{lem:general_UCU} are defined by $2^{n-k}$ different $k$-qubit unitaries, it is determined by $2^{n-k}\cdot 4^{k}-1=2^{n+k}-1$ free parameters. Similar to the depth lower bound for general unitary in \cite{sun2021asymptotically}, given $m$ ancillary qubits, the depth lower bound for UCU $V^S_T$ is $\Omega\left(\frac{2^{n+k}}{n+m}\right)$. Moreover, we can also obtain a depth lower bound $\Omega(n)$ by the light cone. This proof is the same as the proof of depth lower bound for quantum state preparation in \cite{sun2021asymptotically}. Combining the two results above, giving $m\ge 0$ ancillary qubits, the depth lower bound for UCU $V^S_T$ is $\Omega\left(n+\frac{2^{n+k}}{n+m}\right)$. When $k=O(1)$, the depth in Lemma \ref{lem:general_UCU} is asymptotically optimal. The case for general $k$ is left as an interesting open question.

\end{enumerate}

\begin{theorem}[Restatement of Theorem \ref{thm:unitary_mul_intro}]\label{thm:unitary_mul}
For any $m\ge 0$, any $n$-qubit unitary $U$ can be implemented by a standard quantum circuit with $m$ ancillary qubits of depth 
\[
%\textcolor{red}{\begin{cases}
%            O\left(n2^n+\frac{4^n}{m}\right), & \text{if~} m=O\left(\frac{2^n}{n}\right),  \\
%           O\left(n2^{n/2}+\frac{n^{1/2}2^{3n/2}}{m^{1/2}}\right),  & \text{if~} m=\Omega\left(\frac{2^n}{n}\right),
%        \end{cases}}
\begin{cases}
           O\left(\frac{4^n}{n+m}\right), & \text{if~} m=O\left(\frac{2^n}{n}\right),  \\
          O\left(\frac{n^{1/2}2^{3n/2}}{m^{1/2}}\right),  & \text{if~} m\in[\Omega\left(2^n/n\right) , O(4^n/n)], \\
          O\left(n2^{n/2}\right), & \text{if~}  m = \Omega(\frac{4^n}{n}).
       \end{cases}
\]
%\sout{and one of size}
% \begin{align*}
%  % & \textcolor{red}{\begin{cases}
%  %           O\left(4^n\right), & \text{if~} m=O\left(2^n/n\right),  \\
%  %          O\left(n^{1/2}2^{3n/2}m^{1/2}\right),  & \text{if~} m\in\left[\Omega\left(2^n/n\right),O(4^n/n)\right],\\
%  %          O\left(2^{5n/2}\right),  & \text{if~} m=\Omega\left(4^n/n\right).
%  %       \end{cases}}
% \begin{cases}
%           \sout{ O\left(4^n\right), & \text{if~} m=O\left(2^n\right),  \\
%           O\left(2^{3n/2}m^{1/2}\right),  & \text{if~} m\in[\Omega\left(2^n\right),O(4^n)],\\
%           O\left(2^{5n/2}\right),  & \text{if~} m=\Omega\left(4^n\right).
%        \end{cases}
% \end{align*}
\end{theorem}
\begin{proof}
%\sout{Assume that $m=O(4^n/n)$, otherwise we only use this many ancillary qubits.}
Let $D_n(k,m)$ %\sout{and $S_n(k,m)$} 
denote the {minimum} circuit depth %\sout{and size, respectively,} 
of a general $n$-qubit UCU $V^{[n-k]}_{\{n-k+1,\ldots,n\}}$ with $m$ ancillary qubits. Especially, $D_n(n,m)$ denote the {minimum} depth of an $n$-qubit unitary $U$. %Let $D'_k(n,m)$ \shengyu{(This is the same as $D_n(1,m)$?)} and $S'_k(n,m)$ denote the circuit depth and size of $n$-qubit uniformly controlled gate $V^{[n]-\{k\}}_{\{k\}}$ with $m$ ancillary qubits. 
According to the cosine-sine decomposition in Figure \ref{fig:cosine-sine}, for every $k\in[n]$ we have
\begin{align*}
    D_n(n,m)&\le 2D_n(n-1,m) + D_n(1,m) & (\text{Eq.}\eqref{eq:CSdecomp})\\
            &=2D_n(n-1,m)+O\big(n+\frac{2^n}{n+m}\big) & (\text{Lemma \ref{lem:UCG}})\\
%            &=2^{n-k} D_n(k,m)+O(2^{n-k})\times O\left(n+\frac{2^n}{m}\right) & (\text{by recursion})\\
            &\le 2^{n-k}D_n(k,m)+O\big(n2^{n-k}+\frac{2^{2n-k}}{n+m}\big).& (\text{by recursion})
\end{align*}
% and
% \begin{align*}
%     S_n(n,m)&\le 2S_n(n-1,m) + S_n(1,m)& (\text{Eq.}\eqref{eq:CSdecomp})\\
%             &=2S_n(n-1,m)+O\left(2^n\right)& (\text{Lemma}~ \ref{lem:UCG})\\
%             &\le 2^{n-k} S_n(k,m)+\left(2^{n-k}-1\right)\times O\left(2^n\right)& (\text{by recursion})\\
%             & = 2^{n-k}S_n(k,m)+O\left(2^{2n-k}\right).
% \end{align*}
Now we use Lemma \ref{lem:general_UCU}, $D_n(k,m)=O\left(n2^{k/2}+\frac{2^{n+\frac{3}{2}k}}{n+m}\right)$. % \sout{and $S_n(k,m)=O\left(m2^{k/2}+2^{n+3k/2}\right)$}. 
Hence, for any $k\in[n]$ we have
\begin{align*}
    &D_n(n,m)=2^{n-k}\times O\big(n2^{k/2}+\frac{2^{n+\frac{3}{2}k}}{n+m}\big)+O\big(n2^{n-k}+\frac{2^{2n-k}}{n+m}\big)=O\big(n2^{n-\frac{k}{2}}+\frac{2^{2n+\frac{1}{2}k}}{n+m}\big).%\\
    % &S_n(n,m)=2^{n-k}\times O\left(m2^{k/2}+2^{n+3k/2}\right)+O\left(2^{2n-k}\right)=O\left(m2^{n-k/2}+2^{2n+k/2}\right).
\end{align*}
%If the number of ancillary qubits $m<2^n$, let $k=O(1)$ and $D_n(n,m)=O\left(n2^n+\frac{4^n}{m}\right)$. 
When $m = O(2^n/n)$, we take $k = 1$, and get depth $D_n(n,m) \le O(4^n/(n+m))$. When $\Omega(2^n/n) \le m \le O(4^n/n)$, we take $k = \log m + \log n -n $, and get depth $D_n(n,m) \le O\left(\frac{n^{1/2}2^{3n/2}}{m^{1/2}}\right)$. When $m = \Omega(4^n/n)$, we take $k = n$, and get depth $D_n(n,m) \le O(n2^{n/2})$. This completes the proof. %\sout{ for the depth}. \sout{The size follows a similar argument: For $m = O(2^n)$, $\Omega(2^n) \le m \le O(4^n)$ and $m = \Omega(4^n)$, we take $k = 1$, $k = \log m - n$, and $k = n$, respectively, obtaining the size upper bound $S_n(n,m) = O(4^n)$, $O(2^{2n/2}m^{1/2})$, and $O(2^{5n/2})$, respectively. This completes the proof.}
\end{proof}

\paragraph{Acknowledgments} We thank Jonathan Allcock for discussions on the background of QRAM. 

\bibliographystyle{quantum}
\bibliography{general_unitary}

\onecolumn

\end{document}